  \definecolor{carmine}{RGB}{150, 0, 24}
\newcommand{\ii}{\mathrm{i}}
\def\half{\frac{1}{2}}
\newcommand{\A}{{\mathsf{A}}}
\newcommand{\B}{{\mathsf{B}}}
\newtheorem{theorem}{Theorem}
\newtheorem{lemma}{Lemma}
\newtheorem{definition}{Definition}
\newtheorem{conjecture}{Conjecture}
\begin{document}

\title{Classification of joint quantum measurements based on entanglement cost of localization}

\author{Jef Pauwels}
\author{Alejandro Pozas-Kerstjens}
\author{Flavio Del Santo}
\author{Nicolas Gisin}

\affiliation{Group of Applied Physics, University of Geneva, 1211 Geneva, Switzerland}
\affiliation{Constructor University, 1211 Geneva, Switzerland}

\begin{abstract}
Despite their importance in quantum theory, joint quantum measurements remain poorly understood. An intriguing conceptual and practical question is whether joint quantum measurements on separated systems can be performed without bringing them together. Remarkably, by using shared entanglement, this can be achieved perfectly when disregarding the post-measurement state. However, existing localization protocols typically require unbounded entanglement. In this work, we address the fundamental question: ``Which joint measurements can be localized with a finite amount of entanglement?'' We develop finite-resource versions of teleportation-based schemes and analytically classify all two-qubit measurements that can be localized in the first steps of these hierarchies. These include several measurements with exceptional properties and symmetries, such as the Bell state measurement and the elegant joint measurement. This leads us to propose a systematic classification of joint measurements based on entanglement cost, which we argue directly connects to the complexity of implementing those measurements. We illustrate how to numerically explore higher levels and construct generalizations to higher dimensions and multipartite settings.
\end{abstract}

\maketitle

\section{Introduction}

Joint measurements are pivotal in quantum theory, featuring prominently in key quantum information primitives such as entanglement swapping \cite{Zukowski1993}, teleportation \cite{Bennett1993}, and dense coding \cite{Bennett1992}. Despite their importance, joint measurements have been somewhat eclipsed by the extensive research on joint states. Systematic characterizations of entangled states \cite{HorodeckiReview2009}, the quantification of entanglement in states \cite{Plenio2007}, and methods for witnessing its presence with minimal assumptions \cite{Werner1989,Augusiak2014} represent just a few examples of the many problems that have been extensively studied. When it comes to measurements, especially entangled measurements—-where some or all POVM elements are entangled—-the situation is radically different. Some surprising phenomena are indeed known, such as the existence of measurements that, despite being described by orthogonal product states, have eigenstates that cannot be perfectly distinguished via local operations and classical communication \cite{Bennett1999}. However, a systematic characterization of joint measurements is lacking even in the simplest scenario of measurements on systems of two qubits \cite{DelSanto2024}. This gap underscores a vital segment of quantum information theory that merits further exploration \cite{Cavalcanti2023}. 

A deeper understanding of joint quantum measurements is also practically relevant. Establishing quantum correlations over long distances will require intermediate nodes using quantum repeaters \cite{Briegel1998}, which perform joint measurements on the systems received to distribute entanglement. Moreover, the ability to perform joint measurements is crucial for developing quantum networks \cite{networkReview}, which establish quantum correlations between multiple parties \cite{Saunders2017,Carvacho2022,Gu2023,Piveteau_2022}.

A historically important question has been whether the measurement of nonlocal observables can be localized, in the sense that the quantum-to-classical transition occurs locally, independently and instantaneously in each part of the joint system being measured \cite{Popescu1994,Sorkin1993,Gisin2024}. The original motivation for this question was to understand if the quantum measurement process is compatible with the theory of relativity \cite{Aharonov1980,Aharonov1981}. The apparent tension with relativity was already noted by Landau and Peierls in 1933 \cite{Landau1931}, who even contemplated the need for new principles to preclude the \textit{a priori} existence of nonlocal observables. This issue remains an active area of inquiry today, particularly in the context of quantum field theory \cite{Fewster2023}, following the foundational work of Sorkin \cite{Sorkin1993}. The seminal results by Vaidman, Reznik, and Groisman \cite{Groisman2001,Groisman2003,Vaidman2003} show that every joint measurement can be implemented by isolated parties when these parties have access to a source of shared entangled particles. This has led to many exciting applications, such as schemes for nonlocal quantum computation and position-based quantum cryptography \cite{Beigi2011,Gonzales2020,Buhrman2014}. However, these protocols \cite{Vaidman2003,Clark2010,Ishizaka2008} typically require an infinite amount of entanglement, either on average \cite{Vaidman2003,Ishizaka2008} or in the worst case \cite{Clark2010}.

\begin{figure}[t]
  \centering
  \includegraphics{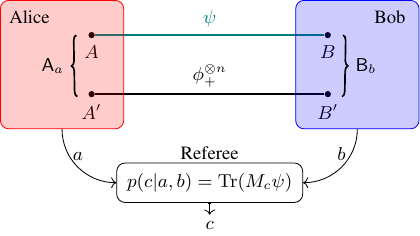}
  \caption{A joint measurement $M_{AB} = \{M_c\}_c$ is $n$-ebit localizable if and only if the Born statistics $p(c|M_c,\psi)$ for any state $\psi$ can be reconstructed from the outcomes $a$ and $b$ of Alice and Bob's local operations $\A_a$ and $\B_b$ on their respective subsystems $A$ and $B$, with the assistance of $n$ copies of the maximally entangled state $\phi_+$ (see Definition~\ref{def:localizable}).}
  \label{fig:scenario}
\end{figure}

In this work, we ask the fundamental question: ``Which joint measurements can be performed locally with a \emph{given amount of entanglement}?'' (see Figure~\ref{fig:scenario} and Definition~\ref{def:localizable}). To this end, we develop a finite-resource version of the teleportation-based protocol described in Ref. \cite{Vaidman2003}. This protocol is closely connected to the Clifford hierarchy \cite{Gottesman1999}, which is in turn connected to well-established notions of circuit complexity \cite{Buhrman2013,Speelman2016,Chakraborty2015}. This leads us to propose entanglement cost of localization as the basis for a classification of joint measurements. We analytically characterize all two-qubit measurements that can be localized at the first levels of the hierarchy. These measurements are particularly symmetric and correspond to several known measurements with interesting properties, in particular the complete and partial Bell state measurements, the elegant joint measurement \cite{Gisin2019} and (twisted) Bell state measurements \cite{Aharonov1980,Boreiri2023}, consistent with the intuition that the low levels of our proposed classification correspond to measurements of low complexity. Our work represents a first step in the systematic characterization of joint quantum measurements, which we discuss how to generalize to more higher levels, higher-dimensional systems, and multiple parties.

\section{Preliminaries}\label{sec:formalism}
Can a measurement on a joint system $AB$ be conducted while only locally interacting with each subsystem $A$ and $B$ within separate isolated laboratories? Let us consider a measurement, represented by a projective operator-valued measure (POVM)  $M_{AB}=\{M_{c}\}_c$. The measurement process in quantum theory is governed by two postulates: (i) the distribution over the measurement outcomes $c$ is given by Born's rule, $p(c) = \Tr(M_c \psi)$, where $\psi$ is the state of the quantum system to be measured, and (ii) the state of the system after observing the outcome $c$ is given by L\"uders' rule; for projective measurements, $\psi_c = M_c \psi M_c / p(c)$.
In this work we focus on what is actually observable in an experiment: the statistical aspect of the measurement, disregarding the post-measurement state. Indeed, if one insists on reproducing the post-measurement state according to L\"uders' rule, it has been established that only measurements that erase all local information in their state-update rule can be performed locally without breaking causality \cite{Popescu1994}. 

Our question can thus be phrased in terms of a correlation game, see Fig.~\ref{fig:scenario}. The game features three parties: Alice, Bob, and an external Referee. Alice and Bob want to prove to the Referee their ability to simulate the statistics of a joint measurement $\{M_c\}_c $. The Referee prepares a joint state, $\psi_{AB}$, and gives the subsystems $A$ and $B$ to respectively Alice and Bob. Each party can perform arbitrary local operations on their subsystems, but they are not allowed to communicate. Next, each party relays only classical data to the Referee. If the Referee can reconstruct the statistics of the measurement $\left\{p(c) = \Tr(M_c \psi)\right\}_c$ from the combined classical data, the parties win the game. We say that the measurement $M_{AB}$ is \emph{localizable} if the parties can win the game, with unit probability, regardless of the input state.

Clearly, the only measurements which can be localized according to the scheme above --without shared resources-- are product measurements, since implementing (classically or quantumly) correlated measurements requires of a source of prior (classical or quantum) correlations between the parties.
The situation thus changes when the parties have access to shared entanglement \cite{Aharonov1980}. Indeed, it was shown by Vaidman and others that, with the assistance of entanglement in the form of EPR pairs (also called \textit{ebits}) shared between Alice and Bob, every measurement can be localized \cite{Vaidman2003}.
However,  in order to run the scheme proposed in Ref.~\cite{Vaidman2003} one \emph{always} requires an infinite amount of entanglement to localize any measurement.
These results were refined in Ref.~\cite{Clark2010}, where the authors proposed a protocol that, despite still requiring infinite entanglement in the worst case, consumes a finite number of ebits on average.

The central question addressed in this work is complementary, namely: given a finite amount of entanglement, which joint measurements can Alice and Bob localize? To this end, we quantify entanglement by the number of EPR pairs, and formalize the following notion of localizability:
\begin{definition}[$n$-ebit localizable]
    A quantum measurement $M_{AB}$ represented by a POVM $\{M_c\}_c$, acting on a separable Hilbert space $\mathcal{H} = \mathcal{H}_A \otimes \mathcal{H}_B$, is $n$-ebit localizable if and only if there exist local measurements $\{\A_a\}_a \subset \mathcal{H}_{AA'}$ and $\{\B_b\}_b \subset \mathcal{H}_{BB'}$, where $A'$ and $B'$ are local ancillas, such that for all $c$ the following holds:
    \begin{equation} \label{eq:generaldefinition}
        M_c = \sum_{a,b} p(c|a,b) \Tr_{A'B'}\left[\left(\A_a \otimes \B_b\right) \left(\mathbb{1}_{AB}\otimes{\phi^+}^{\otimes n}_{A'B'} \right)\right].
    \end{equation}
    \label{def:localizable}
\end{definition}
We note that the above definition is equivalent to the requirement that the Born statistics can be reproduced for any input state $\psi$.

In this work, we restrict ourselves to projective non-degenerate measurements\footnote{Any degenerate measurement can be obtained trivially by binning the outcomes of a full-rank measurement.}. The former simplifies the problem since linearity of Born's rule implies that it is sufficient to check Definition \ref{def:localizable} for the eigenstates of the measurement.
Such measurements can be represented as a unitary matrix, $M$, where each column  is an eigenvector, $\ket{\psi_i}$, of the measurement.
For example, for a two-qubit measurement, the matrix $M$ is given by
\begin{equation}
  M=\left( \ket{\psi_1}, \ket{\psi_2}, \ket{\psi_3}, \ket{\psi_4}\right),
  \label{eq:M}
\end{equation}
where $\ket{\psi_i} \in \mathbb{C}^4$ are column vectors.
Below, we simply refer to $M$ as the measurement for brevity.
Note that $M^\dagger$ is also a unitary matrix, that diagonalizes the measurement. This is, in order to perform the measurement, one applies $M^\dagger$ to any state and measures the resulting state in the computational basis.

For the sake of localizablity, we are interested only in equivalence classes of measurements. Indeed, if a measurement is localizable, then so is any local unitary rotation of it. Similarly, if a measurement $M_{AB}$ is localizable, then so is the measurement $M_{BA}$, in which its subsystems are permuted\footnote{This is relevant because, as will become clear later, localization protocols are often asymmetric between the parties.}. 
We can also permute the basis states, which corresponds to a relabelling of the outcomes, and change the global phases of each eigenvector in the unitary representation $M$, which is a physically irrelevant artifact of the unitary representation. This leads us to the following definition:

\begin{definition}[Equivalent measurements] \label{def:equivalence}
  Two measurements $M_{AB}$ and $M'_{AB}$ are equivalent $M_{AB} \sim M'_{AB}$ if and only if there exist local unitaries, $U_A$ and $U_B$, such that $M'_{AB} = M_{\pi(AB)} \cdot \left(U_A \otimes U_B\right) \cdot \tilde{P} \cdot \Phi $, where $\tilde{P}$ is a permutation matrix, $\pi$ denotes a permutation of the subsystems $A$ and $B$, and $\Phi$ is a diagonal matrix with phases.
\end{definition}

\section{Localization protocols}

The usual way to perform a joint measurement is to simply bring the subsystems together. Equivalently, this can be achieved by one party teleporting their part of the joint system to the other party, who then performs the measurement. Neither of these strategies is allowed in the scenario we consider, because any form of communication, quantum or classical, is disallowed. Nevertheless, the teleportation protocol is the inspiration for a class of \textit{blind ping-pong teleportation} protocols, which include the protocol by Vaidman \cite{Vaidman2003} as well as its modification by Clark \textit{et al.} \cite{Clark2010}. The basic idea of such protocols is that Alice and Bob teleport the state of the system they want to measure back and forth blindly (e.g., without communicating the correction) between their laboratories, each applying local operations and local Bell measurements at each step. Each teleportation introduces Pauli distortions unknown to the other party, and the protocols continue until these are corrected.

Alternative protocols have been explored, including some based on port-based teleportation \cite{Ishizaka2008}. Although these protocols still require infinite resources to localize arbitrary measurements without errors and with unit probability, they improve the scaling of entanglement consumption from doubly exponential to exponential in both deterministic and probabilistic versions \cite{Beigi2011}. It remains uncertain whether infinite entanglement is also needed if the requirement of universality is relaxed (i.e., if the protocol is only required to work for specific measurements). This relaxation would circumvent the no-programming theorem \cite{Nielsen1997}, which prohibits any universal protocol for port-based teleportation from working deterministically and with unit fidelity. Another class of protocols is adaptive protocols, in which the parties adjust their local strategies based on the outcomes of previous measurements. These protocols are known to work deterministically for specific measurements \cite{Groisman2003,Clark2010,Gisin2024}. However, this approach necessarily depends on the specifics of the measurement, unlike blind-teleportation protocols, which work universally. 

Below, we start by sketching the general idea of the Vaidman protocol \cite{Vaidman2003}\footnote{See also Ref.~\cite{Clark2010} for an excellent explanation.} before presenting our version with finite resources.

\subsection{Vaidman protocol}
The scheme starts by Bob teleporting his part of the shared system $\ket{\psi}$ by performing a Bell state measurement (BSM) to his part of $\ket{\psi}$ and a shared ebit. 
Alice then holds the state $\mathbb{1} \otimes \sigma_{b_0} \ket{\psi}$, where $b_0=0, X,Y,Z$ denotes the outcome of Bob's BSM (unknown to Alice) and we define $\sigma_0 = \mathbb{1}$. She attempts to transform the initial state into the computational basis and teleport it back, so that Bob can measure in the computational basis to complete the measurement.
To this end, she applies the unitary $M^{-1}=M^\dagger$ (where, recall, $M$ is the matrix, given by Eq.~\eqref{eq:M}, whose adjoint rotates the target measurement basis to the computational basis), and uses two more ebits to teleport the resulting two-qubit system back to Bob.
Bob then holds the state $\left(\sigma_{a_0} \otimes \sigma_{a_1}\right) \cdot M^\dagger \cdot \left(\mathbb{1}\otimes\sigma_{b_0}\right) \ket{\psi}$, where $a_0$ and $a_1$ are the outcomes of Alice's BSMs.

If Bob's initial BSM was non-distorting (i.e., $b_0=0$, for which $\sigma_0=\mathbb{1}$), he measures in the computational basis and the measurement is completed. This is so because the subsequent distortions, $\sigma_{a_0} \otimes \sigma_{a_1}$, simply amount to a relabelling of the outcomes, which can be corrected by the Referee.
If $b_0=X,Y,Z$, the protocol takes on a tree-like structure, where each level has a branch for each possible distortion produced by Bob's measurement.
Every branch is labelled by $b_0$ and contains two ebits. 
Bob uses the $b_0$-th branch to teleport the system back to Alice, who receives the state $\left(\sigma_{b_1}\otimes\sigma_{b_2}\right)\cdot\left(\sigma_{a_0}\otimes\sigma_{a_1}\right)\cdot M^\dagger\cdot\left(\mathbb{1}\otimes\sigma_{b_0}\right)\ket{\psi}$ in that branch.

Alice does not know which branch was used by Bob, but knows how each branch is distorted. Thus, she applies to each branch the corresponding correction $U_{b_0}$ that satisfies
\begin{equation}
  \left(U_{b_0}\otimes\mathbb{1}\right)\cdot\left(\sigma_{a_0}\otimes\sigma_{a_1}\right)\cdot M^\dagger\cdot\left(\mathbb{1}\otimes\sigma_{b_0}\right) = M^\dagger
\end{equation}
i.e., that would produce $M^\dagger$ if Bob's teleportation were non-distorting,
and then teleports each branch back to Bob.

If $b_1=b_2=0$ (which happens with probability $1/16$), the state received by Bob in the relevant branch is $\left(\sigma_{a_2}\otimes\sigma_{a_3}\right)M^\dagger\ket{\psi}$.
Thus, he can measure in the computational basis and send his result to the Referee in order to compute the final outcome of the measurement defined by $M$ in the state $\ket{\psi}$.
In every other case, the correction $U_{b_0}$ gets distorted by Pauli matrices. The protocol continues in an analogous way, with Bob teleporting the state back to Alice in one of $3\times15$ branches, who applies a correction $U_{b_0,b_1,b_2}$ for each of them, and so on.
The process is repeated until Bob's BSMs are non-distorting.
Importantly, when this happens is only known to Bob, so Alice has to continue the scheme forever, which implies that the entanglement consumption of the protocol is always infinite.

\begin{figure}[ht]
  \centering
  \begin{subfigure}[One ebit]{0.95\columnwidth}
      \centering
      \includegraphics[width=0.95\columnwidth]{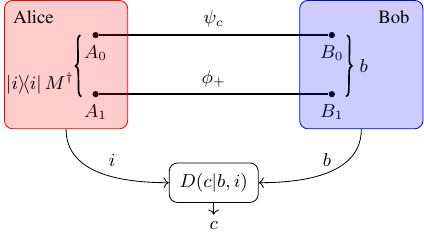}
      \caption{\label{fig:1ebit}
        Finite Vaidman protocol with one auxiliary ebit.}
  \end{subfigure}
  \begin{subfigure}[Three ebits]{0.95\columnwidth}
      \centering
      \includegraphics[width=0.95\columnwidth]{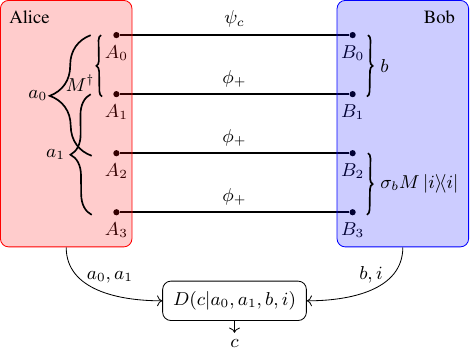}
      \caption{\label{fig:3ebits}
      Finite Vaidman protocol with three auxiliary ebits.}
  \end{subfigure}
      
  \caption{The first (\protect\subref{fig:1ebit}) and  second (\protect\subref{fig:3ebits}) level of the finite-consumption version of the Vaidman localization protocol. The classical labels $a_0$, $a_1$ and $b_0$ denote the outcomes of Bell state measurements performed in the local lab of respectively Alice and Bob. $\ketbra{i}$ denotes a measurement in the computational basis. If the measurement $M$ is localizable, there exists a deterministic decoding $D$ which identifies all eigenstates of the measurement $\psi_c$ from the local data $i,a_i,b$ of Alice and Bob.}
  \label{fig:blind_teleportation}
\end{figure}

Clark \textit{et al.} \cite{Clark2010} introduced a modification of this scheme, that consumes a finite amount of entanglement on average (but still requires infinite entanglement in the worst case).
The basic idea is that instead of applying the unitary $M$ directly, the parties decompose it into a series of Pauli rotations $M = \Pi_{\bm{k}} R_{\bm{k}}(\theta_k)$, where $R_{\bm{k}}(\theta) =\exp(\ii \theta \sigma_{k_1} \otimes \sigma_{k_2})$. The crucial advantage of this is that if a teleportation-induced Pauli distortion does not commute with the Pauli rotation, it always leads to the application of a rotation in the inverse direction, which can be corrected by applying a double rotation. Each party can thus have a termination criterion, which yields the average entanglement cost finite. In Appendix~\ref{app:clark} we explain this scheme in more detail and give details on the conditions under which it succeeds with a finite amount of entanglement.

\subsection{Finite-consumption version}
Here we develop a finite-consumption version of the Vaidman scheme described above, in which after some teleportation step, the party holding the teleported state aborts the protocol by measuring in the computational basis. In general, this will be successful only if the previous teleportations were non-distorting (i.e. probabilistic). However, as we shall see, for some measurements which satisfy certain symmetries, the protocol succeeds with unit probability, regardless of eventual distortions.
A version of this protocol has been considered previously in the context of position-based cryptography \cite{Chakraborty2015}.

We begin with the first level of the protocol, illustrated in Fig.~\ref{fig:1ebit}.
Interestingly, this protocol was already introduced long before the general protocol \cite{Aharonov1980,Aharonov1981}.
Bob starts by performing a BSM on his share of the system $\psi$ and his part of a shared $\phi^+$.
This teleports the state to Alice, who now holds $ \mathbb{1} \otimes \sigma_b \ket{\psi}$, where $b$ is the outcome of Bob's BSM. 
Next, she applies $M^\dagger$ to the joint system and measures in the computational basis.
If Alice's measurement statistics match those of the original measurement, up to a deterministic post-processing depending on Bob's local outcomes, the measurement can be localized using one ebit.

The Referee can reconstruct the correct measurement statistics if and only if $\left|\bra{\pi_b (i,j)} M^\dagger  \cdot \left(\mathbb{1} \otimes \sigma_b\right) \ket{\psi}\right|^2 = \left|\bra{i,j} M^\dagger \ket{\psi}\right|^2\,\forall\,b$, where $\pi_b$ represents an arbitrary permutation of the computational basis states dependent on $b$.
Thus, $M$ is localizable with one ebit if there exist permutation matrices, $\tilde{P}_b$, and diagonal matrices of phases, $\Phi_b$, (since the absolute value is used to compute the probabilities) such that
\begin{equation} \label{eq:1ebit}
M^\dagger\cdot\left(\mathbb{1}\otimes\sigma_b\right)\cdot M =  \tilde{P}_b \cdot \Phi_b \equiv P_b
\end{equation}
for all $b$.
We note that the equation $b=0$ simply imposes unitarity. 
From the three remaining ones, one is redundant since it can be obtained by multiplying the other two, e.g., $b=X$ and $b=Z$. 

The next level of the protocol uses three ebits, see Fig.~\ref{fig:3ebits}. After applying $M^\dagger$ Alice, instead of measuring in the computational basis, uses the two remaining ebits to teleport both her qubits back to Bob.
Since Bob knows $b$ and thus which distortion $\sigma_{b}$ was applied to $\ket{\psi}$, he tries to correct it by applying the unitary $M^\dagger \cdot \left(\mathbb{1} \otimes \sigma_b\right)\cdot M$, corresponding to the operation he would perform in order to produce the state $M^\dagger\ket{\psi}$ if Alice's teleportation were non-distorting. Allowing for the same freedoms as in the first level (recall, relabeling of the computational basis and arbitrary phases), we find that a measurement $M$ can be localized with three ebits if it satisfies the following set of $4^3$ equations:
\begin{equation}
  M^\dagger \cdot \left(\mathbb{1} \otimes \sigma_b\right)\cdot M \cdot\left(\sigma_{a_1} \otimes \sigma_{a_2}\right)\cdot M^\dagger \cdot\left(\mathbb{1} \otimes \sigma_b\right)\cdot M = P_{a_1,a_2,b},
  \label{eq:3ebit}
\end{equation}
for $a_1$, $a_2$, $b =0,X,Y,Z$ and where each $P_{a_1,a_2,b}$ is a permutation matrix with free phases.
Note that, in terms of the matrices $\mathcal{M}_b=M^\dagger\cdot\left(\mathbb{1}\otimes\sigma_b\right)\cdot M$, Eq.~\eqref{eq:3ebit} takes the same form as the one-ebit equation, Eq.~\eqref{eq:1ebit}.

From here on, the protocol takes on the tree-like structure described earlier. Thus, the next level requires nine ebits (the one used in the first level, the additional two used in the second level, and two more for each possible distortion at the first step by Bob, $\sigma_{b}$ with $b\ne 0$). 
Instead of measuring at the last step of the second level, Bob uses the two ebits in the branch labeled by $b$ to teleport the system back to Alice.
Alice does not know which of the branches was used by Bob, so in each branch she applies the appropriate correction corresponding to correcting her earlier distortions plus the distortion induced by Bob, which she knows from the label of the branch.
After that, she measures all her systems in the computational basis.
Measurements for which, independent of the distortion (i.e., the branch), this correction is successful are said to be localizable with nine ebits and satisfy
\begin{equation}
  \mathcal{M}_{a_1,a_2,b}^\dagger \cdot \left(\sigma_{b_1}\otimes\sigma_{b_2}\right)\cdot\mathcal{M}_{a_1,a_2,b} = P_{a_1,a_2,b_1,b_2,b},
  \label{eq:9ebit}
\end{equation}
where $\mathcal{M}_{a_1,a_2,b}=
\mathcal{M}^\dagger_b\cdot\left(\sigma_{a_1}\otimes\sigma_{a_2}\right)\cdot\mathcal{M}_b$ with $\mathcal{M}_b=M^\dagger\cdot\left(\mathbb{1}\otimes\sigma_b\right)\cdot M$, and $P_{a_1,a_2,b_1,b_2,b}$ is, again, a permutation matrix with free phases for every $a_1$, $a_2$, $b_1$, $b_2$ and $b$.

This scheme can be continued indefinitely, with the number of ebits required growing double-exponentially at each step.
Note that, given the form of Eqs.~\eqref{eq:1ebit}-\eqref{eq:9ebit}, the formal generalization of the relevant equations is straightforward, and each corresponds to a set of nested one-ebit-like equations. 

\section{Connection to the Clifford hierarchy and complexity}

The hierarchy induced by the set of conditions for localizability in the finite Vaidman scheme (Eqs.~\eqref{eq:1ebit}-\eqref{eq:9ebit} and their generalization) bear strong similarity to the Clifford hierarchy, introduced in Ref.~\cite{Gottesman1999} in the context of teleportation-based computation.

The $k$-th level of the two-qubit Clifford hierarchy is defined as the set of stabilizers, $\mathcal{C}_k$, of the elements in the previous level, $\mathcal{C}_{k-1}$, where the zeroth level is defined to be the Pauli group, i.e., the set of all tensor products of Pauli matrices. The hierarchy is thus defined recursively as
\begin{equation}
  \mathcal{C}_k \equiv \{U \in \mathcal{U}(4) \,|\, U^\dagger \cdot \mathcal{P}_2 \cdot U \in \mathcal{C}_{k-1} \,\} \,,
  \label{eq:clifford}
\end{equation}
where $\mathcal{C}_0 \equiv \mathcal{P}_2 = \mathcal{P}_1^{\otimes 2}$ and $\mathcal{P}_1\equiv \{\openone,\sigma_X,\sigma_Y,\sigma_Z \}$.
In contrast, our hierarchy (Eqs.~\eqref{eq:1ebit}-\eqref{eq:9ebit} and generalizations) can be defined as
\begin{equation}
  \mathcal{V}_k \equiv \{M \in \mathcal{U}(4) \,|\, M^\dagger \cdot \left(\openone \otimes \mathcal{P}
  _1\right) \cdot M \in \bar{\mathcal{V}}_{k-1} \} \,,
  \label{eq:VaidmanHierarchy}
\end{equation}
where the set $\bar{\mathcal{V}}_k$ is described recursively via
\begin{equation}
  \bar{\mathcal{V}}_k \equiv \{M \in \mathcal{U}(4) \,|\, M^\dagger \cdot \mathcal{P}_2 \cdot M \in \bar{\mathcal{V}}_{k-1} \}
  \label{eq:VaidmanHierarchy2}
\end{equation}
and $\mathcal{V}_0=\bar{\mathcal{V}}_0 \equiv \{\tilde{P} \cdot \Phi\}$ is the set of permutations with phases. Note that Eqs.~\eqref{eq:clifford} and \eqref{eq:VaidmanHierarchy2} are the same, so the differences between $\mathcal{C}_k$ and $\mathcal{V}_k$ lie in the initial and final steps of their constructions.

Comparing these definitions, we see that every measurement $M$ in the $k$-th level of the Clifford hierarchy is also an element of the finite-consumption Vaidman hierarchy at the same level, $\mathcal{C}_k \subset \mathcal{V}_k$, but not vice versa. Indeed, the defining relations of $\mathcal{V}_k$, Eqs.~\eqref{eq:VaidmanHierarchy}, \eqref{eq:VaidmanHierarchy2}, are strictly weaker than those of Eq.~\eqref{eq:clifford}. First, at the zeroth level, we already have that $\mathcal{C}_0 \subset \mathcal{V}_0$ but $\mathcal{V}_0 \not\subset \mathcal{C}_0$, since all Pauli strings of length 2 are permutations with phases but not vice versa\footnote{This difference can be traced back to the fact that for measurements, we are only interested in probabilities, i.e. we take the absolute value squared.}. Second, we do not require that $M\in \mathcal{V}_k$ maps \emph{all} Pauli strings in $\mathcal{P}_2$ to an element of $\bar{\mathcal{V}}_{k-1}$, but only a \emph{subset} of them, those of the form $\mathbb{1}\otimes \mathcal{P}_1 \subset \mathcal{P}_2$.

This additional freedom leads to a difference already at the first level of the hierarchy. For example, the twisted basis measurement, that will be introduced in Eq.~\eqref{eq:twisted} of the next section, is part of the first level of our hierarchy but it is not an element of $\mathcal{C}_1$ since it does not transform $\sigma_X\otimes\sigma_Y$ into an object of the form $\sigma^{(1)}\otimes\sigma^{(2)}$.

The position of a unitary in the Clifford hierarchy is known to be directly connected to the complexity of implementing it \cite{Gottesman1999}. In our context, this can be interpreted as meaning that \emph{the complexity of implementing a nonlocal measurement is related to the entanglement cost of localizing it}. This connection, between the finite-consumption Vaidman scheme and the Clifford hierarchy, has been used in previous works \cite{Chakraborty2015} on position-based cryptography to prove that the complexity of implementing a nonlocal unitary (the honest verification strategy) is related to the entanglement-cost of simulating the nonlocal unitary (the cheating strategy). 

Similarly, Ref.~\cite{Speelman2016} developed a blind ping-pong teleportation protocol akin to the scheme of Clark \textit{et al.}\cite{Clark2010}, linking the amount of entanglement required for localization to the number of $T$ gates\footnote{The $T$ gate is defined as $T=\begin{pmatrix} 1 & 0 \\ 0 & e^{\ii\pi/4} \end{pmatrix}$.} used alongside Clifford gates in order to implement the measurement in the circuit model. This quantity, known as the $T$-depth, is a measure of circuit complexity \cite{Braviy2016}.

These connections to complexity theory strongly support our proposal for the entanglement cost of localization as a natural and sound basis for classification of quantum measurements. In accordance with this intuition, as we will demonstrate in the next section, measurements at the lower levels of our classification correspond to “simple” measurements, which frequently appear in various contexts of quantum information theory and have been experimentally realized in many instances.

\section{Ping-pong-localizable measurements up to three ebits} \label{sec:classification}
In this section we describe all measurements that can be localized with up to three ebits using the finite-consumption versions of the Vaidman scheme described above and the Clark \textit{et al.} scheme described in Appendix~\ref{app:clark}. 
We give the complete list of equivalence classes, and discuss their properties whenever notable.
As we will see, despite the fact that the Clark \textit{et al.} scheme is, a priori, more appealing since it consumes a finite amount of entanglement on average, the finite-entanglement version of the Vaidman scheme allows localizing a richer variety of measurements.

In Appendix~\ref{app:vaidman} we present a systematic, analytic method to find all equivalence classes of measurements that can be localized at any finite step of our finite-consumption version of the Vaidman scheme. We do this explicitly for the first two steps of the protocol by systematically solving Equations \eqref{eq:1ebit} and \eqref{eq:3ebit}. For higher number of ebits, the method still applies but becomes computationally intractable.

Our method relies on the following observations:
First, finding the set of solutions to the $k$-th level of the scheme amounts to recursively solving $k$ sets of one-ebit-like equations.
This can be seen explicitly by comparing Eqs.~\eqref{eq:1ebit}, \eqref{eq:3ebit} and \eqref{eq:9ebit}.
Essentially, the one-ebit equation tells us that localizable $M$s correspond to possible intertwiners between different representations of the Pauli group. Hence, each representation of the Pauli group with the desired structure (permutation + phases), corresponds to an equivalence class of localizable measurements. 
Once we fix a representation of the group, the one-ebit equation \eqref{eq:1ebit} is a Sylvester-type equation \cite{Sylvester1884,Bhatia1997}, which can be solved efficiently. At higher levels of the scheme, the logic is the same: we recursively solve for the intertwiners between the Pauli group and unitaries which are the solutions of the previous level of the scheme. 

The complete algorithm is quite subtle, and we defer its discussion to Appendix~\ref{app:vaidman}. We also provide computer codes, written in \textsc{Mathematica}, that implement it \cite{compapp}.

Our results are summarized by the following theorems:

\begin{theorem}[Localizable measurements with one ebit]
    There are exactly three equivalence classes of measurements that can be performed locally with one ebit in a blind teleportation protocol. Trivially, this includes the product measurement. The other two classes are the Bell state measurement and the $\pi/2$ twisted basis.
\end{theorem}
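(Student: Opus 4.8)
The plan is to solve Eq.~\eqref{eq:1ebit} directly, exploiting the fact that it forces $M$ to intertwine two representations of the single-qubit Pauli group. Concretely, write $\rho(\sigma_b) := \mathbb{1}\otimes\sigma_b$ for the ``source'' representation acting on $\mathbb{C}^4\cong\mathbb{C}^2\otimes\mathbb{C}^2$, and note that the right-hand side of Eq.~\eqref{eq:1ebit}, $b\mapsto P_b=\tilde P_b\Phi_b$, must be a projective representation of $\mathbb{Z}_2\times\mathbb{Z}_2$ by monomial (permutation-times-phase) matrices, because $M^\dagger\rho(\cdot)M$ inherits the group law of $\{\mathbb{1},\sigma_X,\sigma_Y,\sigma_Z\}$ up to the usual Pauli phases. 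So the first step is to classify, up to the equivalences of Definition~\ref{def:equivalence} (basis permutation, diagonal phases, and also conjugation by the local unitaries that survive on the left), all $4$-dimensional monomial projective representations of the Pauli group with the correct multiplier; then for each such representation $b\mapsto P_b$, Eq.~\eqref{eq:1ebit} becomes the Sylvester-type intertwiner condition $\rho(\sigma_b)M=MP_b$ for all $b$, whose solution space I can compute explicitly.

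The second step is the representation-theoretic bookkeeping. The group generated by $\mathbb{1}\otimes\sigma_X$ and $\mathbb{1}\otimes\sigma_Z$ inside $\mathcal{U}(4)$ is (a central extension of) $\mathbb{Z}_2^2$, and as a representation it is $2\oplus 2$ copies of the two relevant characters, i.e.\ $\rho$ decomposes into the two inequivalent faithful irreps of the central extension, each with multiplicity two. For $M$ to exist the monomial representation $P$ must be unitarily equivalent to $\rho$; combined with the monomial constraint, this should pin $P$ down to a short list. I expect essentially three cases: (i) $P_b=\mathbb{1}\otimes\sigma_b'$ for some relabeling of the $\sigma$'s, which after absorbing into $U_B$ gives back the identity and hence the product class; (ii) $P_b$ of the form $\sigma_b'\otimes\sigma_b''$ (genuinely acting on both tensor factors by Paulis), which will turn out to force $M$ to be (locally equivalent to) the Bell-state-measurement unitary, since $M^\dagger(\mathbb{1}\otimes\sigma)M$ being a nontrivial Pauli string is exactly the characteristic property of the BSM up to local unitaries; and (iii) a ``twisted'' monomial representation where the phases in $\Phi_b$ are genuinely non-$\pm1$ (e.g.\ involving $\ii$), giving the $\pi/2$-twisted basis of Eq.~\eqref{eq:twisted}. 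For each case the intertwiner equation is solved by Schur-type arguments: the solution space is spanned by a small number of parameters, and fixing unitarity plus the equivalence freedoms collapses it to a single equivalence class.

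The third step is to confirm there is nothing else: I must show the three monomial-representation cases above are exhaustive, and that no two of the resulting $M$'s are equivalent. Exhaustiveness follows from the classification in step two — any monomial $P$ equivalent to $\rho$ must, after conjugating by a permutation, have its permutation part equal to one of a few allowed patterns on $\{1,2,3,4\}$ (the identity, a product of two transpositions, or $4$-cycles), and the phase part is then constrained by the requirement that $b\mapsto P_b$ be a representation with the Pauli multiplier. Non-equivalence of product vs.\ BSM vs.\ twisted basis can be certified by a local-unitary invariant, e.g.\ the entanglement spectrum of the eigenvectors (product states vs.\ maximally entangled vs.\ an intermediate/maximally entangled-but-phase-twisted profile distinguished by a finer invariant such as the set of concurrences or a Makhlin-type invariant).

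The main obstacle I anticipate is the classification of admissible monomial projective representations in step two: one has to be careful about which phase multipliers are compatible with the $4\times4$ monomial structure \emph{and} with equivalence to the fixed source representation $\mathbb{1}\otimes\sigma_b$, and to do the case analysis without missing a branch (in particular the twisted one, which is exactly the novelty relative to the Clifford hierarchy noted after Eq.~\eqref{eq:VaidmanHierarchy2}). Managing the interplay between the permutation freedom $\tilde P$ and the surviving local-unitary freedom $U_A\otimes U_B$ — so that one neither over- nor under-counts equivalence classes — is the delicate part; the Sylvester solve and the final invariant computations are then routine.
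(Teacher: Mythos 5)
Your overall strategy is the same as the paper's (Appendix~\ref{app:vaidman}): view Eq.~\eqref{eq:1ebit} as an intertwining condition between the representation $b\mapsto\mathbb{1}\otimes\sigma_b$ and a monomial (permutation-times-phase) representation $b\mapsto P_b$ of the Pauli group, classify the admissible $\{P_b\}_b$, and then solve the resulting Sylvester equation for each, with a final uniqueness/non-equivalence check. However, your concrete enumeration of the admissible permutation patterns is wrong in a way that would derail the exhaustiveness step. Since $M$ is unitary and $\sigma_b^2=\mathbb{1}$, each $P_b$ satisfies $P_b^2=\mathbb{1}$ and is Hermitian with spectrum $(+1,+1,-1,-1)$; this forces the permutation part to be an \emph{involution}, so $4$-cycles (which you include) are impossible, while \emph{single} transpositions with a $+1$ and a $-1$ on the two fixed points (which you omit) are not only admissible but essential: for the $\pi/2$-twisted basis one finds $M^\dagger(\mathbb{1}\otimes\sigma_Z)M=\ket{0}\bra{0}\otimes\sigma_Z+\ket{1}\bra{1}\otimes\sigma_X$ and $M^\dagger(\mathbb{1}\otimes\sigma_X)M=\ket{0}\bra{0}\otimes\sigma_X+\ket{1}\bra{1}\otimes\sigma_Z$, both of which have a single transposition as permutation part. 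Your restriction to ``identity, product of two transpositions, or $4$-cycles'' would therefore miss precisely the twisted class you flag as the novelty. The paper avoids this by systematically listing all Hermitian monomial matrices with the right spectrum (its $21$ candidate forms built on the $10$ symmetric permutations) before searching for triples closing the $SU(2)$ algebra.

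A second, smaller problem is your non-equivalence certificate. The $\pi/2$-twisted basis consists entirely of product states (it is a product-state basis that is not a product basis), so the entanglement spectrum, concurrences, or tangles of the eigenvectors cannot separate it from the product class; you need an invariant sensitive to whether the basis factorizes as a tensor product of local bases (or, as the paper does in Appendix~\ref{app:properties} for the three-ebit classes, the local-unitary-invariant correlations among the reduced Bloch vectors). With the corrected list of permutation patterns and a suitable invariant, the rest of your plan (Sylvester solve, Schur-type uniqueness up to $U\otimes\mathbb{1}$) matches the paper's Lemmas~1 and~2 and would go through.
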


\begin{theorem}[Localizable measurements with three ebits]
    There are exactly eight equivalence classes of measurements that can be performed locally with three ebits. Aside from the three classes identified in the first level, the following classes are added: the partial Bell state measurement, the elegant joint measurement, the $E_2$ measurement, the twisted Bell state measurement (tBSM), and the $B_2$ measurement.
\end{theorem}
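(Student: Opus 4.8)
The plan is to solve the set of equations~\eqref{eq:3ebit} systematically, exploiting the structural observation that, written in terms of the matrices $\mathcal{M}_b = M^\dagger\cdot(\mathbb{1}\otimes\sigma_b)\cdot M$, Eq.~\eqref{eq:3ebit} has exactly the form of the one-ebit equation~\eqref{eq:1ebit} but with the Pauli string $\sigma_{a_1}\otimes\sigma_{a_2}$ sandwiched between copies of $\mathcal{M}_b$ instead of $M$. Concretely, the three-ebit condition reads $\mathcal{M}_b^\dagger\cdot(\sigma_{a_1}\otimes\sigma_{a_2})\cdot\mathcal{M}_b = P_{a_1,a_2,b}$ for all $a_1,a_2,b$. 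So the first step is to use Theorem~1: any $M$ satisfying the three-ebit equation automatically satisfies the one-ebit equation (set $a_1=a_2=0$ is trivial, but the presence of $\mathcal{M}_b$ already forces, via the $b=0$ block being unitary and the branching structure, that $M\in\mathcal{V}_1$), hence $M$ must be one of the three one-ebit classes \emph{or} a genuinely new class whose $\mathcal{M}_b$ lands in $\bar{\mathcal{V}}_1$ rather than $\bar{\mathcal{V}}_0$. I would therefore split the analysis: (i) re-derive that the three one-ebit classes trivially survive, and (ii) parametrize all $M$ for which the collection $\{\mathcal{M}_b\}_b$ solves a one-ebit-like system.

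The core of step (ii) is the intertwiner picture sketched before the theorem statement. For fixed $b$, the map $\sigma_{a_1}\otimes\sigma_{a_2}\mapsto P_{a_1,a_2,b}$ must be a (projective, because of the free phases $\Phi$) representation of the two-qubit Pauli group by permutation-plus-phase matrices, and $\mathcal{M}_b$ is the intertwiner conjugating the standard Pauli representation to it. Since there are only finitely many inequivalent $4$-dimensional projective representations of $\mathcal{P}_2$ realizable by monomial (permutation $\times$ phase) matrices, one enumerates them — this is a finite group-theoretic classification. For each admissible target representation, Eq.~\eqref{eq:3ebit} with that choice fixed becomes, as the paper notes, a Sylvester-type linear equation for the entries of $\mathcal{M}_b$, hence for $M$ through $\mathcal{M}_b = M^\dagger(\mathbb{1}\otimes\sigma_b)M$; solving it gives an affine variety of candidate $M$, which one then intersects with the unitarity constraint and the consistency constraints tying together the different $b$ (recall one $b$-equation is redundant, so effectively $b=X$ and $b=Z$). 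After solving, quotient by the equivalence relation of Definition~\ref{def:equivalence} — local unitaries $U_A\otimes U_B$, subsystem swap $\pi$, outcome permutation $\tilde P$, global phases $\Phi$ — to collapse the solution set to representatives. One then checks that the new representatives are precisely the partial BSM, the elegant joint measurement, the $E_2$ measurement, the twisted BSM, and the $B_2$ measurement, by writing each in the unitary form~\eqref{eq:M} and verifying~\eqref{eq:3ebit} directly, and conversely that no solution falls outside these five classes.

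The last component is to confirm that the \emph{Clark et al.} scheme (Appendix~\ref{app:clark}) adds nothing new at the three-ebit level — i.e., any measurement localizable in three ebits by the Clark protocol is already in the list — so that the count of eight is the full answer for "localized with three ebits" as claimed; this is a direct check against the Pauli-rotation decomposition condition. The main obstacle I anticipate is step (ii): faithfully enumerating the inequivalent monomial projective representations of $\mathcal{P}_2$ and, for each, carrying the Sylvester solution all the way through the unitarity and cross-$b$ consistency conditions without missing branches or double-counting under Definition~\ref{def:equivalence}. The bookkeeping is delicate — phases interact with permutations in the equivalence quotient — and it is exactly the part the paper relegates to Appendix~\ref{app:vaidman} with computer assistance; a clean human-readable proof would need to organize the representation list by conjugacy data (e.g.\ which Pauli strings map to diagonal vs.\ off-diagonal monomials) and handle each case uniformly.
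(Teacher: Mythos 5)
Your proposal follows essentially the same route as the paper's Appendix~\ref{app:vaidman}: rewrite Eq.~\eqref{eq:3ebit} as a one-ebit-like condition on $\mathcal{M}_b=M^\dagger\cdot(\mathbb{1}\otimes\sigma_b)\cdot M$, enumerate the monomial (permutation-times-phase) projective representations of the Pauli group, solve the resulting Sylvester equations for the intertwiners, and quotient by Definition~\ref{def:equivalence} (the paper's concrete handles for the ``delicate bookkeeping'' you worry about are the Hermiticity/spectrum constraint reducing the $4!$ permutations to $21$ parametrized forms, and the factorization $P_{a_1,a_2,b}=P_{a_1,0,b}\cdot P_{0,a_2,b}$ reducing the $SU(2)\otimes SU(2)$ search to commuting pairs of $SU(2)$ triplets). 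One slip worth fixing: your parenthetical claim that a three-ebit solution ``automatically satisfies the one-ebit equation'' is false --- setting $a_1=a_2=0$ only forces $\mathcal{M}_b^2$ (not $\mathcal{M}_b$ itself) to be a permutation with phases --- though your subsequent framing in terms of $\mathcal{M}_b\in\bar{\mathcal{V}}_1$ rather than $\bar{\mathcal{V}}_0$ is the correct one and is what the rest of your plan actually uses.
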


Theorem 1 proves Conjecture 1 in Ref.~\cite{Gisin2024}. We furthermore note that all bases localizable with up to three ebits are related to the elegant joint measurement (EJM), defined in Ref.~\cite{Gisin2019}, by the relation $M_{\rm EJM}^\dagger \sim M^\dagger$. This is, for some ordering of the columns, the \emph{adjoint} of all bases localizable with up to three ebits is equivalent to the adjoint of the EJM up to the equivalence relations of Def.~\ref{def:equivalence}. This affirms and makes more precise Conjecture 2 in Ref.~\cite{Gisin2024}, which stipulated that all measurements at this level are somehow related to the EJM. 
Below we give one representative of each equivalence class and discuss their properties. 

\begin{itemize}

\item \emph{$\frac{\pi}{2}$-twisted basis measurement.} [1 ebit]
\begin{equation}
  \begin{aligned}\frac{\pi}{2}-\text{Twisted} &= \{\ket{00},\ket{01},\ket{1+},\ket{1-} \}\\
    &=\text{CRot}_{\frac{\pi}{2}}\{\ket{00},\ket{01},\ket{10},\ket{11} \}\, ,
  \end{aligned}\label{eq:twisted}\end{equation}
where $\text{CRot}_{\theta}$ is a controlled rotation of angle $\theta/2$ along the Y axis.
Also called the BB84 basis, it is the paradigmatic example of a basis composed of product states that is not a product basis \cite{Lau2011,Chitambar2014}. This basis can also be implemented through an adaptive scheme (i.e. not in the class of ping-ping teleportation protocols) using one ebit \cite{Aharonov1980,Gisin2024}. As noted before, it is not a Clifford operation. While it appears in the first level of the hierarchy, it only appears in the second level of the Clifford hierarchy, as can be easily verified.

\item \emph{Bell state measurement (BSM)}. [1 ebit]
\begin{equation} \text{BSM} = \{\ket{\phi_\pm},\ket{\psi_\pm} \} \,, \label{eq:BSM} \end{equation} where $\ket{\phi_\pm}=(\ket{00}\pm\ket{11})/\sqrt{2},\ket{\psi_\pm}=(\ket{01}\pm\ket{10})/\sqrt{2}$.

This can intuitively easily be understood, as every Pauli distortion maps a Bell state to another Bell state, and thus any Pauli distortion amounts to a relabeling of the outcomes that can be corrected by the Referee. This is the only measurement that can be localized with one ebit that is not a product measurement, and is the only non-product measurement that can be localized ideally \cite{Popescu1994}\footnote{Indeed, Bob can prepare the post-measurement state simply by locally applying $\sigma_b$ to another shared ebit.}.

\item \emph{Elegant joint measurement (EJM).} [3 ebits]
\begin{equation}
  M_{\rm EJM}=\frac{1}{\sqrt{8}}\begin{pmatrix}
    1+\ii & -1+\ii & 1-\ii & -1-\ii  \cr
    -2\ii & 0 & 0 & -2\ii  \cr
    0 & 2\ii & 2\ii &  0 \cr
    1-\ii & -1-\ii & 1+\ii & -1+\ii
  \end{pmatrix} \,. \label{eq:EJM}
\end{equation}
This is an iso-entangled basis with tangle\footnote{The tangle of a bipartite state $\rho_{AB}$ is defined as $t=2\left(1-\text{Tr}\rho_A^2\right)$, where $\rho_A=\text{Tr}_B\rho_{AB}$ is the partial trace of $\rho_{AB}$ over subsystem $B$. It ranges from $0$ for separable states to $1$ for maximally entangled states.} $t=\frac{1}{4}$.
This measurement was introduced in Ref.~\cite{Gisin2019} for its highly symmetric properties. In particular, it generates the elegant distribution in the triangle network, which was conjectured not to admit a local hidden variable model in the triangle network. This conjecture has been intensely studied \cite{Fraser2018,Krivachy2019,Baumer2024}, and recently proven \cite{Gitton2024}.

\item \emph{The $E_2$ measurement.} [3 ebits]
\begin{equation}
 E_2 = \{(\ket{\psi^-}\pm\ket{00})/\sqrt{2}, (\ket{\psi^+}\pm\ket{11})/\sqrt{2}\}\,. \label{eq:E2}
\end{equation}
This measurement is also iso-entangled with tangle $t=\frac{1}{4}$. Like the EJM, it belongs to the Elegant family of two-qubit iso-entangled measurement family defined in Ref.~\cite{DelSanto2024}.

\item \emph{Partial Bell state measurement (pBSM).} [3 ebits]
\begin{equation}
  \text{pBSM} = \{\ket{\psi_\pm},\ket{00},\ket{11}\}. \label{eq:pBSM}
\end{equation}
This is the only solution that is not iso-entangled.
Two of the eigenstates are product states, while the two other are maximally entangled. It corresponds to a full-rank partial BSM of the kind achievable with linear optics without auxiliary photons\footnote{With a single ebit one can realize the BSM and coarse-grain two of the outcomes, leading to a measurement of only three outcomes.}. 

\item \emph{Twisted Bell state measurement (tBSM).} [3 ebits]
\begin{equation}
  \begin{aligned}
    \text{tBSM} &= \{(\ket{0+}\pm\ket{11})/\sqrt{2}, (\ket{0-}\pm\ket{10})/\sqrt{2}\}\\
    &= \text{CRot}_{\frac{\pi}{2}}(\text{BSM}).
  \end{aligned}
  \label{eq:tbsm}
\end{equation}

This is an iso-entangled basis with tangle$=\frac{1}{2}$.
Note that it can be obtained by applying a controlled $\pi/2$ rotation to the Bell state measurement, in analogy to the twisted product basis \eqref{eq:twisted}. 
It is locally equivalent to the tilted Bell state measurement introduced in Ref.~\cite{Boreiri2023} and part of the Bell family of iso-entangled two-qubit measurements defined in Ref.~\cite{DelSanto2024}.
     
\item \emph{The $B_2$ measurement.} [3 ebits]
\begin{equation}
  B_2 = \{(\ket{1-}\pm \ket{01})/\sqrt{2}, (\ket{1+}\pm \ii\ket{00})/\sqrt{2}\} \,. \label{eq:B2}
\end{equation}

The vectors of this measurement basis all have tangle $t=\frac{1}{2}$. It is a member of the Bell family of iso-entangled two-qubit measurements \cite{DelSanto2024}. This is the only measurement at the second level of our hierarchy that is not part of the second level of the Clifford hierarchy\footnote{In fact, it is not a member of the Clifford hierarchy at least until level six.}.
\end{itemize}

Every basis, except the partial Bell state measurement, is iso-entangled, i.e., the eigenstates associated to all the outcomes have the same degree of entanglement.
In Appendix~\ref{app:properties} we compute the auto-correlations between the Bloch vectors of the reduced states for $E_2$, tBSM, $B_2$ and EJM. These are invariant under the equivalence relations (Definition~\ref{def:equivalence}), which proves that all above bases, including those with the same tangle, are indeed distinct.

Since the results above are based on specific localization protocol, an interesting question is whether the entanglement consumption is minimal. 

In the modified Vaidman scheme developed in Ref.~\cite{Clark2010}, which localizes measurements with a finite average entanglement cost, the measurements that can be localized at a finite worst-case cost can be easily explored through exhaustive enumeration, as detailed in Appendix~\ref{app:clark}. With one ebit, the only option for the parties is to teleport the full state to one of them, similar to the Vaidman scheme. However, if we restrict operations to those that can be described by Pauli rotations, Alice and Bob cannot localize any measurements other than the product measurement.

At the next level, using three ebits, Alice and Bob can deterministically implement Pauli rotations with an angle $\theta=\pi/2$. With these rotations, they can localize the BSM by performing the rotation along, for example, the $X \otimes X$ axis. It is only at the subsequent level, requiring five ebits, that Alice and Bob can localize the $\pi/2$-twisted basis measurement \eqref{eq:twisted}. At this level, they can also localize the tBSM \eqref{eq:tbsm}, but not other measurements that could be localized with just three ebits in the Vaidman scheme.
These observations indicate that, despite requiring less entanglement on average, the scheme in Ref.~\cite{Clark2010} allows the localization of fewer measurements at low cost.

\section{Generalizations}
\label{sec:generalisations}

A complete classification of measurements requires several generalizations. As we will see, the methods of the last section straightforwardly generalize to (i) more ebits, (ii) higher dimensions, and even (iii) multiple parties. As explained briefly below and in detail in Appendix~\ref{app:vaidman}, the method is subject to combinatorial explosion, making it unpractical to obtain complete results with standard computational power. For this reason, we explore some of these generalizations numerically in this section.

\subsection{Higher levels of the hierarchy}

The sets of equations that determine the measurements localizable at higher levels of the Vaidman scheme can be solved recursively using the methods of Appendix~\ref{app:vaidman}. For example,
Equation~\eqref{eq:9ebit}, which characterizes the measurements that can be localized in the third level of the scheme, has exactly the same form as Eq.~\eqref{eq:app:3ebit-2}, so it can be solved via the same procedure.
This leads to a set of inertwinners $\{\mathcal{M}_{a_1,a_2,b}\}_{a_1,a_2}$, which enter the right-hand side of Eqs.~\eqref{eq:imp1} and \eqref{eq:imp2}, which in turn determine the $\mathcal{M}_b$s appearing in the one-ebit equation, Eq.~\eqref{eq:app:1ebit}, which can be solved to obtain the measurements $M$.
This logic can be extended to arbitrary levels of the scheme.
However, the complexity of the method grows exponentially with the number of ebits, and becomes intractable. 

We thus resort to a numerical search to explore higher levels, detailed in Appendix~\ref{app:heuristic}. Based on this search we make the following conjecture:
\begin{conjecture}[Localisable measurements with nine ebits]
  There are 47 equivalence classes of two-qubit measurements that can be performed locally with nine ebits. These include the eight classes that can be localized with three ebits. All states have tangles in the set $\left\{0,1,\frac{1}{2},\frac{1}{4},\frac{3}{4},\frac{1}{8},\frac{3}{8},\frac{5}{8},\frac{1}{4}(2\pm\sqrt{2}),\frac{1}{8}(2\pm\sqrt{2}),\frac{1}{8}(3\pm\sqrt{2})\right\}$.
\end{conjecture}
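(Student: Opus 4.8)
\emph{Toward a proof.} Since the statement is phrased as a conjecture obtained from a numerical search, the plan is to upgrade that evidence to a proof by running the analytic recursion of Appendix~\ref{app:vaidman} one level further, replacing the heuristic search by exact computer algebra. The starting observation is that the nine-ebit conditions, Eq.~\eqref{eq:9ebit}, have \emph{exactly} the structure of the one-ebit equation, Eq.~\eqref{eq:1ebit}, written in the auxiliary matrices $\mathcal{M}_{a_1,a_2,b}$, which are themselves one-ebit intertwiners $\mathcal{M}_{a_1,a_2,b}=\mathcal{M}_b^\dagger(\sigma_{a_1}\otimes\sigma_{a_2})\mathcal{M}_b$ built from $\mathcal{M}_b=M^\dagger(\mathbb{1}\otimes\sigma_b)M$. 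The search therefore decomposes into three nested layers: (i) at the innermost layer, pick an admissible family $\{\mathcal{M}_{a_1,a_2,b}\}$, i.e.\ a representation of the relevant Pauli labels by permutation-with-phase matrices $P_{a_1,a_2,b_1,b_2,b}$ together with the Sylvester-equation solutions realizing it; (ii) feed these into the consistency relations Eqs.~\eqref{eq:imp1}--\eqref{eq:imp2} to constrain the $\mathcal{M}_b$; (iii) solve the outer one-ebit equation Eq.~\eqref{eq:app:1ebit} for $M$. Each layer is a finite enumeration over decorated Pauli representations followed by the solution of a Sylvester-type \emph{linear} system, which is fast once the representation is fixed, exactly as in the proofs of the one- and three-ebit theorems.

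The second step is to make this enumeration both finite and provably complete. The finiteness input is that, modulo the equivalences of Definition~\ref{def:equivalence}, the ``decorated permutation'' representations of the two independent $\mathbb{Z}_2$-labels on $\mathbb{C}^4$ form a finite set; listing them is a routine — if bookkeeping-heavy — representation/cohomology computation. The difficulty is that propagating these choices through the two outer layers multiplies the case count, and the naive product across three layers is astronomically large. The main obstacle is thus controlling this combinatorial explosion: one must prune aggressively, discarding at each layer any partial assignment whose Sylvester system is already infeasible, or whose solution space cannot contain a unitary $M$, long before descending to the next layer. A secondary subtlety is that the Sylvester solution space is frequently positive-dimensional, so one must intersect it with the unitarity variety to obtain a (generically zero-dimensional) algebraic set whose points are then extracted exactly, e.g.\ via Gr\"obner bases over $\overline{\mathbb{Q}}$; this is where the algebraic numbers behind the unusual tangles such as $\tfrac{1}{8}(3\pm\sqrt{2})$ enter.

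Given the finite list of exact candidate matrices $M$ produced this way, the easy direction is to certify, by exact arithmetic, that each of the $47$ candidates satisfies Eq.~\eqref{eq:9ebit} (hence is nine-ebit localizable) and that the list strictly contains the eight three-ebit classes. The completeness direction — that there are no others — rests entirely on the exhaustiveness of the pruned enumeration above; as in the three-ebit case, this is the part of the argument that ultimately amounts to a verified computer-algebra certificate rather than a by-hand deduction. The tangle claim is then a corollary: for each of the $47$ explicit classes one computes the tangle $t = 2\bigl(1-\Tr\rho_A^2\bigr)$ of the eigenstates directly and reads off the stated set $\{0,1,\tfrac12,\tfrac14,\tfrac34,\dots\}$, while distinctness of the classes — including those sharing a tangle — is certified by the Bloch-vector autocorrelation invariants of Appendix~\ref{app:properties}, which are preserved under Definition~\ref{def:equivalence}. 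I expect the genuine difficulty to lie not in any individual computation but in engineering the pruning well enough that the search over intertwiner chains terminates with a \emph{provably} complete list, rather than merely a numerically exhaustive one.
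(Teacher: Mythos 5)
First, a point of calibration: the paper does not prove this statement at all --- it is stated as a conjecture, and the paper's only supporting evidence is the heuristic numerical search of Appendix~\ref{app:heuristic}, namely minimizing the cost function $C(M)=\sum_{\mathbf{t}}\left|\left|2\left|f(\mathbf{t},M)\right|-J\right|-J\right|$ over a parametrized unitary via Nelder--Mead. That search produces candidate solutions but carries no completeness guarantee, which is precisely why the authors stop at a conjecture. Your proposed route is genuinely different from what the paper does at this level: you propose to push the exact recursive intertwiner method of Appendix~\ref{app:vaidman} (enumerate decorated Pauli representations, solve nested Sylvester systems, intersect with the unitarity constraint) one level further. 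If carried out, that route would actually \emph{prove} the statement, which the paper's heuristic cannot; your identification of the nested one-ebit structure of Eq.~\eqref{eq:9ebit} and of where the algebraic numbers in the tangle set would arise is correct and consistent with the paper's framework.

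However, as written your proposal is a research plan, not a proof. The single obstacle you name --- engineering the pruning so that the three-layer enumeration over representations terminates with a provably complete list --- is exactly the obstacle the paper itself cites when it declares the analytic method ``intractable'' beyond three ebits, and you do not overcome it: no bound on the number of admissible representations at the inner layer is established, no pruning criterion is shown to be both sound and sufficient to make the search feasible, and consequently neither the count of $47$ classes, nor the inclusion of the eight three-ebit classes, nor the tangle set is actually derived. The verification and distinctness steps you describe (exact substitution into Eq.~\eqref{eq:9ebit}, Bloch-vector autocorrelation invariants) are fine but presuppose a finite list you have not produced. In short: you have correctly diagnosed why the statement is a conjecture and sketched the only plausible path to a proof, but the statement remains unproven by your argument, just as it is in the paper.
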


Among these solutions (a full list of which can be found in \cite{compapp}) we highlight 
the equivalents of the twisted product and Bell measurements, respectively Eqs.~\eqref{eq:twisted} and \eqref{eq:tbsm}, with a twist of $\pi/3$, and an iso-entangled basis with tangle $5/8$ in which all reduced Bloch vectors point to the vertices of a tetrahedron, a property reminiscent of the EJM in Eq.~\eqref{eq:EJM}.

Similarly, the methods of Appendix~\ref{app:clark} for the Clark \textit{et al.}  scheme are also subject to combinatorial explosion due to the possibility of using the ebits to implement either Pauli chains of small angles or multiple chains of larger angles. Despite this, the next level in the scheme after those analyzed in Section~\ref{sec:classification} only requires five ebits and can also be studied exhaustively. As previously noted, all measurements found at this level can be localized with three ebits in the Vaidman scheme.

\subsection{Higher dimensions}

Another important generalization is the extension to higher-dimensional systems. This approach provides a pathway towards a systematic understanding of high-dimensional joint measurements, an area that remains poorly understood. Indeed, our current knowledge is limited and extends little beyond generalizations of Bell state measurements.

The Vaidman scheme can be straightforwardly generalized to any dimension. The parties now share the maximally entangled qu$d$it states $\ket{\phi_+} = \sum_{j=0}^{d-1}\ket{j}\otimes\ket{j}$, where $\ket{j}$ is the computational basis. To teleport the qu$d$it systems back and forth, Alice and Bob measure the higher-dimensional Bell basis, $\{X^{x_1}_d Z^{x_2}_d \otimes \mathbb{1}_d \ket{\phi_+}\}_{x_1,x_2}$, where $X_d$ and $Z_d$ are the clock and shift matrices (the $d$-dimensional generalizations of the Pauli's), given by \begin{equation}
  X_d = \sum_{j=0}^{d-1}\ket{j+1}\bra{j},\quad Z_d = \sum_{j=0}^{d-1}\omega^j\ket{j}\bra{j},
\end{equation}
where $\omega=e^{2\pi \ii/d}$.

The equations that define the measurements that can be localized with a finite number of ebits in the Vaidman scheme (Eqs.~\eqref{eq:1ebit}-\eqref{eq:9ebit}) are thus straightforwardly generalized to higher dimensions and can be solved using the procedure in Appendix~\ref{app:vaidman}.

For example, qu$d$it-qu$d$it measurements $M$ that can be localized with one e$d$it are given by the solutions to the following generalization of \eqref{eq:1ebit}:
\begin{equation}
    M^\dagger \cdot\left(\mathbb{1}_d \otimes X^{b_1}_d Z^{b_2}_d\right)\cdot M = P_{b_1,b_2},
\end{equation}
where $b_1,\,b_2 = 0,1,\ldots,d-1$, and $P_{b_1,b_2}$ are $d \times d$ permutation matrices with phases associated to representations of the Weyl-Heisenberg group in dimension $d$.

We note that in our general method Appendix~\ref{app:vaidman}, we formulated the problem exploiting the fact that the Pauli group elements are also the generators of the $SU(2)$ Lie algebra. This allowed us to restrict to Hermitian permutation matrices with phases, which significantly reduces the problem. Aside from the fact that this property is lost in higher dimensions, the method is unchanged: one simply looks for all subsets of permutations with phases that satisfy the Heisenberg-Weyl group relations.
Unfortunately, this generalization also has high combinatorial cost.
Even for $d=3$ it becomes impossible to systematically find all of them, since there are $ 9! \choose 9$ possible subsets of permutations with phase that could form a representation of the Weyl-Heisenberg group in dimension 3.

Nevertheless, an interesting observation is that, for a given representation $\{P_b\}_b$ associated to a measurement, one can look for the natural generalization of this representation in higher dimensions, and subsequently compute the corresponding measurement as the intertwiner between the representation defined by $\{P_b\}_b$ and that given by the clock and shift matrices.
For example, for the Bell basis, one can write $P_X = \sigma_X \otimes \mathbb{1}$ and $P_Z = \sigma_Z \otimes \sigma_X$. It is straightforward to verify that $P_{X} = X_d \otimes \mathbb{1}_d$ and $P_{Z} = Z_d \otimes X_d$ satisfy the braiding relations, $P_Z P_X = \omega P_X P_Z$, and the normalisation conditions, $P_X^d = P_Z^d = \mathbb{1}_d$.
Thus, they form a representation of the Weyl-Heisenberg group for any dimension $d$.
The corresponding intertwiner, $M$, computed by following the steps in Appendix~\ref{app:vaidman}, is (up to local unitaries) the generalisation of the Bell basis to higher dimensions, $\{X^{x_1}_d Z^{x_2}_d \otimes \mathbb{1}_d \ket{\phi_+}\}_{x_1,x_2}$.

In the same way, one can look for generalizations of the other measurement classes. In Appendix~\ref{app:twistedbell} we explore this for the twisted Bell basis.

\subsection{Multiple parties}

Reference \cite{Vaidman2003} proposes a generalization of the bipartite scheme discussed in Section~\ref{sec:classification} to any number of parties. In this scheme, the state is teleported cyclically among the parties using multiple ebit channels to account for all possible distortions introduced by the teleportations. One of the parties attempts to correct these distortions and sends the state back, potentially performing a measurement on it if the previous teleportations were non-distorting. Similar to the bipartite case, a finite-resource version of this protocol can also be considered.

As an illustration, consider the case of three parties (Alice, Bob, and Cindy).
Bob and Cindy each use one an ebit to teleport the state to Alice, who tries to apply $M^\dagger$ to the full state and measures in the computation basis.
Thus, tripartite joint measurements that can be localized in this way (requiring two ebits) satisfy the equation
\begin{equation}
  M^\dagger \left(\mathbb{1}\otimes\sigma_b\otimes\sigma_c\right) M = P_{b,c}  \, .
  \label{eq:3party}
\end{equation}

We note that this equation can in principle also be solved using the methods presented in Appendix~\ref{app:vaidman}, but the computational cost is prohibitive.

As in the bipartite case, a subset of measurements that can be performed deterministically at the first level of the scheme includes Clifford operations, which  include the GHZ basis \cite{Clark2010}.
The heuristic search reveals a total of 13 bases. Removing the GHZ and product bases, and bases where two parties perform one of the measurements localizable with one ebit, we are left with seven new bases, all of them twisted. These are twisted bases in which, depending on the eigenstate of one of the parties, the other two measure in a different basis (which can be Bell-Bell in a different basis, product-product in a different basis, and product-Bell), and one basis where depending on the outcome of the first party the other two swap their measurement bases. Notably, there are no bases with tripartite entangled eigenstates other than the GHZ basis.
We list all the bases in the computational appendix \cite{compapp}.

More interestingly, in the second level of the hierarchy the finite scheme goes as follows:

\begin{enumerate}
  \item Bob and Cindy teleport their qubits to Alice with BSM outcomes \( b_0 \) and \( c_0 \).
  \item Alice applies \( M^\dagger \).
  \item Alice teleports the three qubits to Bob with outcomes \( a_1, a_2, a_3 \).
  \item Bob teleports the three qubits to Cindy using one of four ports, each utilizing three ebits, depending on his previous \( b_0 \), with BSM outcomes \( b_1, b_2, b_3 \).
  \item Cindy applies the following unitary to each of her four ports and measures in the computational basis: 
  \[
M^\dagger \cdot \left( \mathbb{1} \otimes \sigma_{b_0} \otimes \sigma_{c_0} \right) \cdot M \cdot \mathbf{\sigma_{A}} \cdot M^\dagger \cdot \left( \mathbb{1} \otimes \sigma_{b_0} \otimes \sigma_{c_0} \right) \cdot M
  \]
where \( \mathbf{\sigma_{A}} = \sigma_{b_1} \sigma_{a_1} \otimes \sigma_{b_2} \sigma_{a_2} \otimes \sigma_{b_3} \sigma_{a_3} \).
  \end{enumerate}

Hence, measurements $M$ that can be localized in this way satisfy the equations 
\begin{equation} \label{eq:2ndlevel3qubit}
  \mathcal{M}_{b_0c_0} \sigma_{b_1} \sigma_{a_1} \otimes \sigma_{b_2} \sigma_{a_2} \otimes \sigma_{b_3} \sigma_{a_3}  \mathcal{M}_{b_0c_0}^\dagger= P_{a_1,a_2,a_3,b_0,b_1,b_2,b_3,c_0} \,,
\end{equation}
where we have defined $\mathcal{M}_{b_0c_0} = M^\dagger\cdot \left( \mathbb{1} \otimes \sigma_{b_0} \otimes \sigma_{c_0} \right)\cdot M$. This is the direct generalization of Eq. \eqref{eq:9ebit} to the tripartite case and can in principle be solved analytically using the methods outlined previously. At this level, the scheme consumes 17 ebits. 

The heuristic search (see Appendix~\ref{app:heuristic}) reveals a total of 121 bases. We find two particularly interesting examples of tripartite qubit measurements that can be localised at this level of the scheme, which resemble the elegant joint measurement. For one of these solutions, the reduced single-party Bloch vectors of every party align with the vertices of a tetrahedron. For the other, all the vectors point to the vertices of a square. In Appendix \ref{app:3qubit} we give the explicit forms of these solutions. Both of them, and the remaining bases localizable at this level, are also listed in the computational appendix \cite{compapp}.

\section{Discussion}\label{sec:discussion}

Understanding joint measurements in quantum theory is both foundationally clarifying and essential for quantum information processing between multiple parties. In this work we studied the problem of determining which measurements can be localized with a given amount of entanglement. To this end, we introduced finite-resource versions of well-known teleportation-based localization protocols \cite{Vaidman2003,Clark2010} and developed general analytical methods to find all measurements which can be localized at each step of these schemes. This induces a classification on joint measurements based on the entanglement cost of realizing them locally. At low levels of this hierarchy, we recover several well-known measurements with special symmetries that appear frequently in quantum information theory. For example, the Bell state measurement can be localized with one ebit, and the elegant joint and twisted Bell state measurements can be localized with three ebits.

These basic results motivate a systematic generalization of our classification to higher levels of the scheme, higher dimensions and more parties. Our methods, although generally applicable, are computationally demanding, and new techniques will be needed. We explored some of these generalizations numerically. Preliminary results suggest exciting possibilities. In particular, the prospect of finding new joint measurements with special properties with potential applications for quantum protocols, the potential of lifting qubit measurements to general dimensions (e.g. the twisted Bell state measurement) and to multiple parties (e.g. generalizing the EJM to multiple parties). The latter has potential applications to network nonlocality, in particular generalizing the elegant distribution \cite{Gisin2019} to more parties.  We note in this context that any localization protocol for a joint measurement used in establishing correlations in a network scenario can be used to extend said network to a larger network, simply by promoting each teleportation step to a network edge. It would be interesting to explore the implications of this for network nonlocality.

An appealing feature of our methods is that they are rooted in representation theory, which makes them applicable to higher-dimensional systems. We saw that the same structure is present in the generalization to multipartite scenarios.
This suggests that a deeper group-theoretical understanding of the problem of localization may lead to novel insights as well as new methods to complete the classification. In particular, our classification is both conceptually and formally related to the Clifford hierarchy which was introduced in the context of teleportation-based quantum computation \cite{Gottesman1999}. The level of the hierarchy in which a unitary gate appears can be interpreted as the complexity of implementing the unitary in a fault-tolerant way \cite{Buhrman2013,Speelman2016,Chakraborty2015}, and we have argued that the same is true for localizing measurements. 

In this work we quantified entanglement by the number of ebits, i.e., we always assumed that the parties shared maximally entangled states. It is an interesting open question whether these are optimal for localizing measurements. Port-based teleportation, which is more efficient with partially entangled states when the number of ports is high \cite{Mozrzymas2018} might suggest a negative answer, although it is unclear whether this is also the case for non-probabilistic and ideal protocols. Indeed, in the context of non-local computation it is known that the finite-consumption version of the Vaidman scheme, which used maximally entangled states, is more efficient than the optimal port-based protocol for specific classes of unitaries, suggesting that port-based protocols may be suboptimal \cite{Chakraborty2015}. 

The finite-ebit versions of the Vaidman scheme and its modification by Clark \textit{et al.} give rise to systematic classifications of joint measurements based on entanglement cost. These schemes are universal in the sense that any measurement can be localized given a suitable amount of ebits, even if that cost may be infinite (\cite{Miller2024}, see also Appendix~\ref{app:clark}). However, these schemes may not always be optimal. Indeed, it is known that, e.g., the basis defined by applying $\text{CRot}_{\frac{\pi}{4}}$ to the computational basis, which we did not find in our analysis of the lower levels of the Vaidman scheme, can be localized with three ebits using an adaptive scheme \cite{Groisman2003,Gisin2024}. Finding a general method to determine localization cost that is independent of any particular protocol is a big open question.

\emph{Note added:} A previous version of this paper stated that the EJM and the $B_2$ measurement cannot be localized using two ebits. This claim was based on a semidefinite programming relaxation of Eq.~\eqref{eq:generaldefinition} that substituted the tensor product structure by invariance under partial transposition. This relaxation allows for classical communication between the parties, which is not permitted in our scenario. The seemingly nontrivial results were affected by an incorrect handling of complex numbers in the numerical solver used. After correcting the issues, we no longer have evidence either for or against two-ebit non-localizability of EJM and $B_2$. Resolving this question will require new methods; a promising direction is an algebraic-geometry approach as in Ref. \cite{Akibue2025}.

\begin{acknowledgements}
  We thank Victor Gitton for his insights on the methods for the one-ebit equation, Micha\l{} Studzi\'nsky, Marco T\'ulio Quintino, and Jessica Bavaresco for their valuable discussions on port-based teleportation, Sadra Boreiri,  Antoine Girardin and Tam\'as Kriv\'achy for discussions on network nonlocality, Mirjam Weilenmann, Armin Tavakoli, and Miguel Navascu\'es for their input on SDP criteria, and Eric Chitambar, Florian Speelman and Harry Buhrman for highlighting connections to other works.

We acknowledge support from NCCR-SwissMAP and and from FWF (Austrian Science
Fund) through an Erwin Schr\"odinger Fellowship (Project
J 4699).

\emph{Author contributions}: JP, APK and NG developed the theory, wrote the codes and performed calculations. JP and APK wrote the manuscript. All authors discussed the results and commented on the manuscript.

\end{acknowledgements}

\bibliography{refs}

\appendix
\onecolumngrid


\section{Clark \textit{et al.} scheme}\label{app:clark}
Clark \textit{et. al} introduced a localization scheme that, in contrast to the one discussed in the main text, consumes a finite amount of entanglement \emph{on average} \cite{Clark2010}.
Here, we present the basic idea of the scheme for qubits.
For details, as well as the generalization to arbitrary dimensions, we refer the reader to Ref.~\cite{Clark2010}.

\textbf{General scheme.}
The scheme is similar to the Vaidman scheme in the sense that the parties perform a series of back-and-forth teleportation rounds.
However, instead of trying to apply the unitary $M^\dagger$ (that, recall, rotates the desired measurement basis to the computational basis) directly, the parties decompose the unitary into a series of Pauli rotations, $M = \Pi_{\bm{k}} R_{\bm{k}}(\theta_k)$, where $R_{\bm{k}}(\theta) =\exp(\ii \theta \sigma_{k_1} \otimes \sigma_{k_2})$.
The vector $\bm{k} \in \{0,\dots,3\}^2$ specifies a Pauli string and $\theta_k$ is a rotation angle.
This decomposition into Pauli rotations offers a distinct advantage: Let $\sigma_{\bm{l}}=\sigma_{l_1}\otimes\sigma_{l_2}$ denote a teleportation-induced Pauli distortion.
When such a distortion happens, the state of the system is impacted in only one of two ways: either (i) the Pauli distortion $\sigma_{\bm{l}}$ commutes with the Pauli rotation $R_{\bm{j}}(\theta)$, i.e. $ R_{\bm{j}}(\theta)\sigma_{\bm{l}} = \sigma_{\bm{l}} R_{\bm{j}} (\theta)$, or (ii) the rotation and the distortion do not commute.
In this latter case, however, the relation $R_{\bm{j}}(\theta)\sigma_{\bm{l}} = \sigma_{\bm{l}} R_{\bm{j}}(-\theta)$ holds.
This contrasts with the Vaidman scheme, where the consequences of every Pauli distortion is typically distinct, and allows Alice and Bob to implement tailored correction mechanisms that allow both of them to reach a terminating condition.

The general mechanism proposed in Ref.~\cite{Clark2010} proceeds in a way similar to that of the Vaidman scheme, namely concatenating teleportations.
The main difference with the Vaidman scheme is that the parties know if their previous teleportation produced or not a distortion that commutes with the Pauli rotation they are trying to implement.
When the distortion and the rotation commute, the corresponding party keeps the state and begins the protocol of implementing the next Pauli rotation.
Otherwise, the party knows that the other party has performed $R_{\bm{j}}(-\theta)$ instead, and thus applies $R_{\bm{j}}(2\theta)$ in order to correct the state before teleporting it back.
The party that receives the state follows in the same fashion: if their previous teleportation produced a distortion that commutes with the Pauli rotation, they keep the state and apply the next Pauli rotation.
Otherwise, they know that the accumulated rotation in the state is $R_{\bm{j}}(-3\theta)$, so they apply $R_{\bm{j}}(4\theta)$ before teleporting back.
The scheme continues in this fashion until one of the parties is certain that the $R_{\bm{j}}(\theta)$ has been successfully applied, in which case the protocol continues for the next Pauli rotation.

Any unitary in a bipartite system can be expressed as a concatenation of Pauli rotations \cite{helgason1979differential}, which makes the scheme of Ref.~\cite{Clark2010} complete.
Moreover, the fact that both parties have a termination criterion allows for a finite average consumption of entanglement.
However, the parties do not know at which point the other party has successfully applied the rotation and passed onto the next Pauli rotation.
This makes that a new chain of ebits is needed for each possible exit point in order to implement the next rotation in the chain.

\textbf{Finite consumption scheme.}
The finite-consumption version of the protocol above was also developed in Ref.~\cite{Clark2010}, and applies when the Pauli rotation chains consist of binary angles, $\theta=(2m-1)\pi/2^D$ where $m$ is an integer.
For these angles, the correct rotation is also implemented when concatenating $D+1$ erroneous rotations, since $R(-\sum_{n=0}^{D+1}2^n\theta) = R(\theta)$.
Therefore, the localization of a binary-angle Pauli rotation is guaranteed to succeed, in the worst-case scenario, in $D+1$ ping-pong teleportation steps.
Since each teleportation steps consumes two ebits (except the first one, which consumes one), a Pauli rotation of angle $\theta=(2m-1)\pi/2^{D}$ can be implemented deterministically consuming at most $2D+1$ ebits.

\textbf{Localizable measurements with few ebits.}
Now we proceed to describe the measurements that can be localized deterministically following the Clark \textit{et. al} scheme for a finite number of ping-pong teleportation rounds.
As discussed above, the measurements localizable within this scheme with finite resources are those admitting a decomposition into Pauli rotations with binary angles, $\theta=(2m-1)\pi/2^D$ for $m$ and $D$ integers.
The deterministic implementation of each of such rotations consumes at most $2D+1$ ebits, although one must bear in mind that when concatenating rotations one needs that amount of ebits per possible exit point of the previous rotation.

The lower steps in the scheme can be explored easily by exhaustive enumeration, since the measurements $M$ that can be localized have the form
\begin{equation}
  M=\prod_{r=1}^N\exp\left(-\frac{\ii}{2}(2m-1)\frac{\pi}{2^{D_{\bm{k}^{(r)}}}}\sigma_{k^{(r)}_1}\otimes\sigma_{k^{(r)}_2}\right)
\end{equation}
for $\bm{k}^{(r)}=(k^{(r)}_1,k^{(r)}_2)\in\{0,1,2,3\}^2$ and $m$, $D_{\bm{k}^{(r)}}$ integers.
The cost in ebits of localizing $M$ is $\sum_{r}\prod_{s<r}(2D_{\bm{k}^{(s)}}+1)$.

When only one one ebit is available, only one rotation (of angle $\theta=\pi$) is possible.
For all Pauli strings given by $\bm{k}$ the corresponding rotation is a permutation matrix.
This means that the only thing that the parties can do is to teleport the full state to one of them, in analogy with the Vaidman scheme.
However, if in contrast to the Vaidman scheme we restrict to performing only operations that can be described by Pauli rotations, Alice and Bob can not localize any measurements other than the product measurement.

The next level, which uses three ebits, allows Alice and Bob to deterministically implement a single Pauli rotation of angle $\theta=\pi/2$.
With these, they can localize the BSM by performing the rotation along, for instance, the $\sigma_X\otimes \sigma_X$ axis.

It is only at the next level, that consumes five ebits, when Alice and Bob can localize the $\pi/2$ twisted basis measurement \eqref{eq:twisted}.
This is because the twisted basis measurement can be written as a single $\pi/4$ rotation along the $\sigma_X\otimes \sigma_Z$, $\sigma_Y\otimes \sigma_Z$, $\sigma_Z\otimes \sigma_X$ or $\sigma_Z\otimes \sigma_Y$ axes.
At this level they can also localize the tBSM \eqref{eq:tbsm}, but not the other measurements that could be localized with just three ebits in the Vaidman scheme.
This means that, despite being asymptotically more favourable, the scheme of Ref.~\cite{Clark2010} allows to localize fewer measurements than that in Ref.~\cite{Vaidman2003} for finite and small shared entanglement.
This may be remedied if we allow the final party to perform an arbitrary operation after the last correction of the final Pauli rotation.
This would, following the spirit of the operation that Alice performs in the first level of the Vaidman scheme.

\textbf{Upper bounds to the entanglement cost of localization.}
While we have seen above that the finite-ebit version of the Clark \textit{et. al} protocol of Ref.~\cite{Clark2010} seems less powerful than the finite-ebit version of the Vaidman protocol of Ref.~\cite{Vaidman2003} for small number of ebits, an appealing property of the Clark \textit{et. al} protocol is that it allows to easily provide upper bounds to the maximum amount of ebits necessary to localize a particular measurement.
The reason for this is that any two-qubit unitary can be decomposed into a sequence of only seven Pauli rotations.
This is known as the Cartan decomposition of $SU(4)$ \cite{helgason1979differential}, which has found many applications in quantum computing \cite{khaneja2000cartan,Kraus2001,Vatan2004,Vidal2004}.
The decomposition reads

\begin{equation}
  M = V_A \otimes V_B e^{\frac{\ii}{2} \xi_1 \sigma_X \otimes \sigma_X} e^{\frac{\ii}{2} \xi_2 \sigma_Y \otimes \sigma_Y} e^{\frac{\ii}{2} \xi_3 \sigma_Z \otimes \sigma_Z} W_A \otimes W_B,
  \label{eq:cartan}
\end{equation}
where  $V_A$, $V_B$, $W_A$ and $W_B$ are local qubit unitaries, so they can be written as sequences of Pauli rotations in terms of their Euler angles, $R_Z(\alpha)R_Y(\beta)R_Z(\gamma)$.

While this decomposition may not be optimal in terms of the number of ebits consumed to localize a given measurement, it provides an upper bound to the entanglement cost of localization.
In order to localize an $M$ given in the decomposition of Eq.~\eqref{eq:cartan}, the parties first apply locally $W_A$ and $W_B$ to their respective shares of $\ket{\psi}$, and then perform the ping-pong protocol implementing, first, the Pauli rotations for the nonlocal angles, and then the Pauli rotations corresponding to the Euler decompositions of $V_A$ and $V_B$.
Note that the final rotations in the Euler decompositions are along the corresponding $Z$ axes.
These rotations do not modify the statistics of measurements in the computational basis, and thus can be ignored.
Thus, the fact that any two-qubit measurement can be expressed as
\begin{equation}
  \begin{aligned}
    M =& \left[R_Z\left(k_1\frac{\pi}{2^{D_1}}\right)\otimes R_Z\left(k_2\frac{\pi}{2^{D_2}}\right)\right]\cdot\left[R_X\left(k_3\frac{\pi}{2^{D_3}}\right)\otimes R_X\left(k_4\frac{\pi}{2^{D_4}}\right)\right] \cdot \left[R_Z\left(k_5\frac{\pi}{2^{D_5}}\right)\otimes R_Z\left(k_6\frac{\pi}{2^{D_6}}\right)\right] \\
    & \cdot R_{XX}\left(k_7\frac{\pi}{2^{D_7}}\right) \cdot R_{YY}\left(k_8\frac{\pi}{2^{D_8}}\right) \cdot R_{ZZ}\left(k_9\frac{\pi}{2^{D_9}}\right) \cdot \left(W_A\otimes W_B\right),
  \end{aligned}
  \label{eq:clark}
\end{equation}
(with $k_i,\,i=1,\ldots,9$ being odd integers, and $D_i$ being integers or $\infty$) means that $M$ can be localized deterministically using, at most, $\sum_{i=3}^9\prod_{D_k\not=\{0,\infty\},k \geq i}(2D_i+1)$ ebits.

When applied to the measurements discussed in the main text, one finds that a maximum number of ebits needed of 462 for the pBSM of Eq.~\eqref{eq:pBSM}, 278 for the EJM of Eq.~\eqref{eq:EJM}, 203 for the $E_2$ basis of Eq.~\eqref{eq:E2} (using as Euler angles $ZYZ$ instead of $ZXZ$), and 1022 for the tBSM basis of Eq.~\eqref{eq:tbsm}.
For the $B_2$ measurement of Eq.~\eqref{eq:B2}, the decomposition involves angles that are not odd multiples of $\pi/2^D$, and thus the upper bound is infinite.

\section{Recursive method for finding all measurements that can be localized with the finite-ebit Vaidman scheme}\label{app:vaidman}

In this appendix we describe a general method for finding all measurements $M$ that can be localized through the finite-ebit Vaidman scheme.
We focus on the cases of one and three ebits.
This is, we will find all solutions to the equations \eqref{eq:1ebit} and \eqref{eq:3ebit}.
The generalisation to higher steps in the scheme, i.e. solving Eq.~\eqref{eq:9ebit} and beyond, is straightforward but computationally more demanding.

First, it is important to note that there are infinitely many solutions to any of the relevant equations. We are only interested in finding a representative of each equivalence class, Definition~\ref{def:equivalence}. 

\subsection{The one-ebit equation \eqref{eq:1ebit}}
Let us recall the first equation to solve, which characterizes all measurements that can be localized in the first level of the finite-consumption Vaidman scheme, which uses one ebit.
Any such measurement has a least one representation in the equivalence class (Definition~\ref{def:equivalence}), which satisfies
\begin{equation} \label{eq:app:1ebit}
    M^\dagger\cdot\mathbb{1}\otimes\sigma_b\cdot M = P_b ,
\end{equation}
for all $b$ and where $P_b=\tilde{P}_b \cdot \Phi_b$ are permutation matrices with phases.
This equation states that $M$ is localizable if and only if it transforms between two representations of the Lie algebra $SU(2)$, that given by the identity tensorized with the Pauli matrices, and that given by $\{P_b\}_b$.
In representation theory, the unitary map $M$ is called an \emph{intertwiner}.
Hence, all measurements $M$ that can be localized with one ebit are all intertwiners between the representation $\mathbb{1}\otimes\sigma_b$ and representations of the form $\tilde{P}_b \cdot \Phi_b$ of $SU(2)$.
We devide the problem of finding these $M$  in two steps:
\begin{enumerate}
    \item Find all representations of the Pauli group of the form $\{P_b\}_b=\{\tilde{P}_b \cdot \Phi_b\}_b$.
    \item Given a representation, defined by a triplet $\{P_b\}_b$, find all intertwiners between $\mathbb{1}\otimes\sigma_b$ and $\tilde{P}_b \cdot \Phi_b$.
    As we will show below, given a fixed representation $\{P_b\}$, the intertwiner is unique within the equivalence class defined above.
\end{enumerate}

A priori, for the case of two qubits, there are $4!=24$ possible permutation matrices $\tilde{P}$, supplemented by arbitary phases.
However, since $\mathbb{1}\otimes\sigma_b$ is also a representation of the Lie algebra $SU(2)$, every element of the representation $P_b$ must be Hermitian, and have the same spectrum as $\mathbb{1}\otimes\sigma_b$.
This reduces to number of possible permutations to the ten symmetric ones and imposes constraints on the possible phases, leading to the following 21 possible $P_b$:

\begin{align*}
  P_1 &=  \left(
\begin{array}{cccc}
0 & 0 & 0 & e^{\ii a_1} \\
0 & 0 & e^{\ii b_1} & 0 \\
0 & e^{-i b_1} & 0 & 0 \\
e^{-\ii a_1} & 0 & 0 & 0 \\
\end{array} \,,
\right)
&P_2 = \left(
 \begin{array}{cccc}
  0 & 0 & 0 & e^{\ii a_2} \\
  0 & \mp 1 & 0 & 0 \\
  0 & 0 & \pm 1 & 0 \\
  e^{-\ii a_2} & 0 & 0 & 0 \\
 \end{array}
 \right) \,, \\
 P_3 &= \left(
   \begin{array}{cccc}
    0 & 0 & e^{\ii a_3} & 0 \\
    0 & 0 & 0 & e^{\ii b_3} \\
    e^{-\ii a_3} & 0 & 0 & 0 \\
    0 & e^{-\ii b_3} & 0 & 0 \\
   \end{array}
   \right) \,,
   &P_4 =\left(
     \begin{array}{cccc}
      0 & 0 & e^{\ii a_4} & 0 \\
      0 & \mp1 & 0 & 0 \\
      e^{-\ii a_4} & 0 & 0 & 0 \\
      0 & 0 & 0 & \pm1 \\
     \end{array}
     \right)\,,\\
     P_5 &= \left(
   \begin{array}{cccc}
    0 & e^{-\ii a_5} & 0 & 0 \\
    e^{\ii a_5} & 0 & 0 & 0 \\
    0 & 0 & 0 & e^{-\ii b_5} \\
    0 & 0 & e^{\ii b_5} & 0 \\
   \end{array}
   \right) \,, 
    &P_6 = 
   \left(
\begin{array}{cccc}
0 & e^{-\ii a_6} & 0 & 0 \\
e^{\ii a_6} & 0 & 0 & 0 \\
0 & 0 & \mp1 & 0 \\
0 & 0 & 0 & \pm 1 \\
\end{array}
\right) \,, \\
P_7 & = \left(
 \begin{array}{cccc}
  \mp 1 & 0 & 0 & 0 \\
  0 & 0 & 0 & e^{\ii a_7} \\
  0 & 0 & \pm 1 & 0 \\
  0 & e^{-\ii a_7} & 0 & 0 \\
 \end{array}
 \right) \,,
 &P_8 = \left(
   \begin{array}{cccc}
    \mp 1 & 0 & 0 & 0 \\
    0 & 0 & e^{\ii a_8} & 0 \\
    0 & e^{-\ii a_8} & 0 & 0 \\
    0 & 0 & 0 & \pm1 \\
   \end{array}
   \right) \,,\\
   P_9 &= \left(
     \begin{array}{cccc}
      \mp 1 & 0 & 0 & 0 \\
      0 & \pm1 & 0 & 0 \\
      0 & 0 & 0 & e^{\ii a_9} \\
      0 & 0 & e^{-\ii a_9} & 0 \\
     \end{array}
     \right) \,,
     &P_{10} = \text{diag}(\text{perm}(-1,-1,1,1)) \,,
\end{align*}
where $a_i,b_i \in [0,2\pi)$ are arbitrary phases.

Next, among the set of $P$ identified above, one needs to find \emph{all} subsets of three $P_b$ that form a representation of the Pauli group.
To this end, it is sufficient to identify triplets $\{P_b\}_b$ that satisfy the $SU(2)$ algebra:
\begin{equation}
  [P_i,P_j] = 2i\epsilon_{ijk}P_k \, ,
\end{equation}
where $\epsilon_{ijk}$ is the Levi-Civita symbol. 

Finding these subsets is the main bottleneck in the method.
A priori, we have $21!/(21-3)!=7980$ possible ordered subsets to check.
However, it follows from the following Lemma that different orderings of $P_b$ lead to the same equivalence class.

\begin{lemma}
    If $M$ is a solution to Eq.~\eqref{eq:app:1ebit} for some $\{P_b\}_b$, then for any permutation $\pi$, the measurement $M \rightarrow \left[\mathbb{1} \otimes \left(e^{\ii \sigma_X k_1 \frac{\pi}{4}} \cdot e^{\ii \sigma_Z k_2\frac{\pi}{4}}\right)\right] \cdot M$ (corresponding to a local basis change on Bob), is a solution to Eq.~\eqref{eq:app:1ebit} for the permuted $P_{\pi(b)}$, for some $k_1,k_2 = 0,1$.
    \label{lemma:unique}
  \end{lemma}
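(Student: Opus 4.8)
The plan is to realize a reordering of the triple $\{P_b\}_b$ as a basis change on Bob's qubit, which is exactly one of the moves permitted by Definition~\ref{def:equivalence}; this immediately places the reordered problem in the same equivalence class as the original one. So suppose $M$ solves Eq.~\eqref{eq:app:1ebit} for $\{P_b\}_b$ and set $M'=(\mathbb{1}\otimes V)\cdot M$ for a single-qubit unitary $V$ acting on Bob's side. Then
\begin{equation}
  M'^\dagger\cdot(\mathbb{1}\otimes\sigma_b)\cdot M' = M^\dagger\cdot(\mathbb{1}\otimes V^\dagger\sigma_b V)\cdot M,
\end{equation}
so if conjugation by $V$ permutes the Paulis up to signs, $V^\dagger\sigma_b V=\varepsilon_b\,\sigma_{\pi(b)}$ with $\varepsilon_b\in\{\pm1\}$, then $M'$ solves Eq.~\eqref{eq:app:1ebit} with $P'_b=\varepsilon_b P_{\pi(b)}$. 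The signs are harmless: $-P_{\pi(b)}$ is still a permutation matrix with phases (the $-1$ is absorbed into $\Phi_{\pi(b)}$), and independently of that, $M'\sim M$ already holds because $\mathbb{1}\otimes V$ is a local unitary. Hence the whole statement reduces to producing, for each permutation $\pi$ of $\{X,Y,Z\}$, a $V$ of the stated form that induces $\pi$ (up to signs) on the Paulis.

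This last point is a short computation in the single-qubit Clifford group. By direct Pauli algebra one checks that conjugation by $e^{\ii\sigma_X\pi/4}$ fixes $\sigma_X$ and exchanges $\sigma_Y\leftrightarrow\sigma_Z$ up to a sign, while conjugation by $e^{\ii\sigma_Z\pi/4}$ fixes $\sigma_Z$ and exchanges $\sigma_X\leftrightarrow\sigma_Y$ up to a sign; for example $e^{\ii\sigma_Z\theta}\,\sigma_X\,e^{-\ii\sigma_Z\theta}=\cos(2\theta)\,\sigma_X-\sin(2\theta)\,\sigma_Y$, which at $\theta=\pi/4$ gives $-\sigma_Y$, and the remaining cases are identical. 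Consequently $e^{\ii\sigma_X k_1\pi/4}\,e^{\ii\sigma_Z k_2\pi/4}$ with $k_1,k_2\in\{0,1\}$ realizes the identity, the transpositions $(Y\,Z)$ and $(X\,Y)$, and a $3$-cycle. These two transpositions generate the full symmetric group on $\{X,Y,Z\}$, so for an arbitrary $\pi$ one composes two or three such elementary moves; since each move is a Bob-local unitary and $\sim$ is transitive, the composite measurement is still equivalent to $M$ and solves Eq.~\eqref{eq:app:1ebit} for the fully reordered triple. In particular the $7980$ ordered subsets collapse onto the $\binom{21}{3}$ unordered ones.

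The only point that is not purely mechanical is this bookkeeping of the normal form. A single factor of the advertised shape $e^{\ii\sigma_X k_1\pi/4}e^{\ii\sigma_Z k_2\pi/4}$, $k_1,k_2\in\{0,1\}$, already covers a generating set of $S_3$ together with one $3$-cycle, while the two remaining permutations are obtained after a short iteration of the Lemma (equivalently, after absorbing additional Pauli factors, which are again Bob-local and wash out of the equivalence class). Confirming that this is enough is a finite verification inside a group of order $24$ whose action on the three Pauli axes is exactly $S_3$; the conjugation identities and the observation that overall signs and local unitaries drop out of Definition~\ref{def:equivalence} are entirely routine.
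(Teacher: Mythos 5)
Your proof is correct and follows essentially the same route as the paper's: conjugate Bob's Pauli by the single-qubit Clifford rotations $e^{\ii\sigma_X\pi/4}$, $e^{\ii\sigma_Z\pi/4}$ to permute the $\sigma_b$ up to signs, absorb the signs into the phase matrices, and conclude the reordered triple yields an equivalent measurement. You are in fact somewhat more careful than the paper, which silently drops the $\pm1$ signs and does not note that the form $e^{\ii\sigma_X k_1\pi/4}e^{\ii\sigma_Z k_2\pi/4}$ with $k_1,k_2\in\{0,1\}$ realizes only four of the six elements of $S_3$, so that the remaining permutations require composing such moves (harmless for the intended use, as you observe).
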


  \begin{proof}
    Assume $M$ satisfies \eqref{eq:app:1ebit} for some $\{P_X,P_Y,P_Z\}$.
    For clarity, assume we permute $P_Y \leftrightarrow P_Z$ (the extension to other permutations is straightforward). 
   
    Define the matrix $M' = \mathbb{1} \otimes e^{\ii\sigma_Z \frac{\pi}{4}}\cdot  M$.
    This matrix satisfies Eq.~\eqref{eq:app:1ebit} for the permuted $\{P_X,P_Z,P_Y\}$.    
    Indeed, plugging $M'$ into Eq.~\eqref{eq:app:1ebit} gives:

    \begin{align}
        M'^\dagger \cdot \mathbb{1} \otimes \sigma_X \cdot M' &= M^\dagger \cdot \mathbb{1} \otimes e^{-\ii \sigma_Z \frac{\pi}{4}} \cdot \sigma_X \cdot e^{\ii \sigma_Z \frac{\pi}{4}} \cdot M \\
        &= M^\dagger \cdot \mathbb{1} \otimes \sigma_Y \cdot M =  P_Y \,.
    \end{align}

    Similarly, $M'^\dagger \cdot \mathbb{1} \otimes \sigma_Y \cdot M' = P_Z$ and $M'^\dagger \cdot \mathbb{1} \otimes \sigma_Z \cdot M' = P_X$.
  \end{proof}

The above Lemma allows us to restrict our search to $21!/(21-3)!/3! = 1330$ possible subsets to check.
Importantly, some subsets of $P_i$ only form valid representations when certain relations between the phases are satisfied.

Once all valid representations are found, one can proceed to step 2).
Given a valid representation $\{P_b\}_b$, we can now find all intertwiners $M$ between $\mathbb{1}\otimes\sigma_b$ and $\tilde{P}_b \cdot \Phi_b$.
Before giving the procedure to construct $M$, we prove the following lemma, which shows that the intertwiner is unique up to the equivalence relations \ref{def:equivalence}.

\begin{lemma}[Uniqueness of solutions to Eq.~\eqref{eq:app:1ebit}]
Given a fixed valid representation $\{P_b\}_b$, the intertwiner $M$ between this representation and $(\mathbb{1}\otimes\text{Pauli})$ is unique up to local unitaries, i.e., if $M$ and $N$ are two solutions to Eq.~\eqref{eq:1ebit} for given $\{P_b\}$, then $N = (U \otimes \mathbb{1}) \cdot M $ for some local unitary $U$.
\end{lemma}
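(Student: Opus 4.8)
The plan is to read Eq.~\eqref{eq:app:1ebit} as an intertwining relation and apply Schur's lemma. Let $M$ and $N$ be two unitaries satisfying $M^\dagger\cdot(\mathbb{1}\otimes\sigma_b)\cdot M = P_b = N^\dagger\cdot(\mathbb{1}\otimes\sigma_b)\cdot N$ for all $b\in\{X,Y,Z\}$ and the same fixed triple $\{P_b\}_b$; note that the argument will not even use that $\{P_b\}_b$ forms a valid representation, only that both $M$ and $N$ solve the equation. The first step is to eliminate the $P_b$: equating the two expressions and conjugating by $N$ shows that the unitary $W:=M N^\dagger$ obeys
\begin{equation}
  W^\dagger\cdot(\mathbb{1}\otimes\sigma_b)\cdot W = \mathbb{1}\otimes\sigma_b \quad\text{for all } b,
\end{equation}
equivalently $[W,\,\mathbb{1}\otimes\sigma_b]=0$ for every $b$.

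The second step is to identify the commutant. Since $\{\mathbb{1},\sigma_X,\sigma_Y,\sigma_Z\}$ spans the full algebra of operators on Bob's qubit, $W$ in fact commutes with $\mathbb{1}\otimes X$ for every operator $X$ on that qubit; equivalently, the representation $\sigma_b\mapsto\mathbb{1}\otimes\sigma_b$ is the direct sum of two copies of the (irreducible) defining representation on the $B$ factor. By the standard description of the commutant of a tensor factor (Schur's lemma), this commutant equals $\{U\otimes\mathbb{1}\}$. Hence $W=U\otimes\mathbb{1}$, with $U$ unitary because $W$ is. The third step is pure bookkeeping: from $W=MN^\dagger$ one gets $N=(U^\dagger\otimes\mathbb{1})\cdot M$, which is exactly the claimed uniqueness up to a local unitary.

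I do not expect a real obstacle here. The only point worth stressing is that the residual freedom lives entirely on subsystem $A$ -- the factor the Paulis do \emph{not} act on -- which is precisely why the protocol ambiguity collapses to a local basis change on Alice's side, consistent with Definition~\ref{def:equivalence}, and what makes the downstream ``fix a representation, then read off $M$'' strategy well posed. The same three steps go through verbatim in dimension $d$, and for the nested higher-ebit equations (which have the same one-ebit form), after replacing the three Paulis by a set of Weyl--Heisenberg operators that spans the algebra of operators on $\mathbb{C}^d$.
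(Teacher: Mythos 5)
Your proof is correct and takes essentially the same route as the paper's: form the unitary $MN^\dagger$ (the paper works with $V=NM^\dagger$), show it commutes with $\mathbb{1}\otimes\sigma_b$ for all $b$, and conclude it lies in the commutant $\{U\otimes\mathbb{1}\}$, which gives the claim. The only difference is that you make explicit the Schur's-lemma/commutant justification that the paper dismisses with ``it is easy to see,'' and you correctly note the argument never uses that $\{P_b\}_b$ is a valid representation.
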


\begin{proof}
 Suppose that $M$ and $N$ satisfy Eq.~\eqref{eq:app:1ebit}.
 Then, we have that
 \begin{equation}
      M^\dagger \cdot \mathbb{1} \otimes \sigma_b \cdot M = N^\dagger \cdot \mathbb{1} \otimes \sigma_b \cdot N \,.
 \end{equation}
Multiplying both sides on the left by $N$ and on the right by $M^\dagger$ gives:
\begin{equation}
    V\cdot (\mathbb{1} \otimes \sigma_b)  = (\mathbb{1} \otimes \sigma_b ) \cdot V \,,
\end{equation}
where we have defined $V \equiv N \cdot M^\dagger $. It is easy to see that since $V$ commutes with $\mathbb{1} \otimes \sigma_b$ for all $b$, it must be of the form $V = U \otimes \mathbb{1}$ for some local unitary $U$. Hence, we have $N \cdot M^\dagger = U \otimes \mathbb{1}$, which implies $N = \left(U \otimes \mathbb{1}\right) \cdot M$.
\end{proof}

To explicitly construct $M$, we note that Eq.~\eqref{eq:app:1ebit} is a homogeneous Sylvester equation \cite{Sylvester1884,Bhatia1997}, which can be solved as a linear system.
Indeed, any matrix equation of the type $AX-XB=0$ can be rewritten as the linear system
\begin{equation}
  \left(\mathbb{1} \otimes A - B^T \otimes \mathbb{1}\right) \text{vec}(X) = 0 ,
  \label{eq:sylvester}
\end{equation}
where $\text{vec}(X)$ is the vectorization of the matrix $X$ obtained by stacking the columns of $X$ on top of each other.
In order to obtain Eq.~\eqref{eq:sylvester} one needs to use the identity $\text{vec}(RS) = (\mathbb{1} \otimes R)\text{vec}(S) =(S^T \otimes \mathbb{1})\text{vec}(R)$. The solutions $X$ to the Sylvester equation are thus in the null space of the matrix $\mathbb{1} \otimes A - B^T \otimes \mathbb{1}$.

Equation \eqref{eq:app:1ebit} is, recall, a set of four equations, one for each $b$.
The one corresponding to $b=0$ trivially imposes unitarity, and one of the remaining three can be obtained by multiplying the other two.
Localizable $M$s satisfy both remaining equations, which means that the null space of
\begin{equation} \label{eq:1ebitnullspace}
  \begin{pmatrix}
    \mathbb{1}_4 \otimes \mathbb{1}_2 \otimes \sigma_X - P^T_X \otimes \mathbb{1}_4 \\
    \mathbb{1}_4 \otimes \mathbb{1}_2 \otimes \sigma_Z - P^T_Z \otimes \mathbb{1}_4  
  \end{pmatrix}
\end{equation}
provides a basis $\{M_\alpha\}_\alpha$ for the solutions to Eq.~\eqref{eq:app:1ebit}.
The final solution $M$ will be of the form $M = \sum_\alpha \lambda_\alpha M_\alpha$, where $\lambda_\alpha$ are arbitrary complex coefficients.
These coefficients are fixed by demanding that $M$ is unitary. 

\subsection{The three-ebit equation \eqref{eq:3ebit}}
In the following we proceed to solve the blind teleportation equation for the second level of the Vaidman protocol, which consumes three ebits.
For completeness, let us reproduce Eq.~\eqref{eq:3ebit}:
\begin{equation}
  M^\dagger \cdot\left(\mathbb{1} \otimes \sigma_b\right)\cdot M \cdot\left(\sigma_{a_1} \otimes \sigma_{a_2}\right)\cdot M^\dagger \cdot\left(\mathbb{1} \otimes \sigma_b\right)\cdot M = P_{a_1,a_2,b},
  \label{eq:app:3ebit-1}
\end{equation}
which is a set of $4^3$ equations (one for each of $a_1,a_2,b\in\{0,X,Y,Z\}$).
If one writes $\mathcal{M}_b=M^\dagger \cdot\left(\mathbb{1} \otimes \sigma_b\right)\cdot M$, Eq.~\eqref{eq:app:3ebit-1} reads
\begin{equation}
  \mathcal{M}_b^\dagger \cdot\left(\sigma_{a_1} \otimes \sigma_{a_2}\right)\cdot \mathcal{M}_b = P_{a_1,a_2,b}.
  \label{eq:app:3ebit-2}
\end{equation}
It is easy to notice the similarity between Eq.~\eqref{eq:app:3ebit-2} and Eq.~\eqref{eq:app:1ebit}.
Indeed, we will see that the solution strategy is very similar, and that it can be generalised straightforwardly to higher levels.

In analogy to the first level of the protocol, the solutions $M$ to Eq.~\eqref{eq:app:3ebit-1} are intertwiners between the representation $\{\mathbb{1} \otimes \sigma_b\}_b$ and $\{\mathcal{M}_b\}_b$ of $SU(2)$.
The novelty resides in that this last representation, moreover, has to be composed of intertwiners between the representations $\{\sigma_{a_1} \otimes \sigma_{a_2}\}_{a_1,a_2}$ and $\{P_{a_1,a_2,b}\}_{a_1,a_2}$ of $SU(2)\otimes SU(2)$.
Therefore, the procedure to follow will consist of
\begin{enumerate}
  \item finding all representations $\{P_{a_1,a_2,b}\}_{a_1,a_2}$ of $SU(2)\otimes SU(2)$,
  \item computing the corresponding intertwinners $\mathcal{M}_b$,
  \item finding, among all $\mathcal{M}_b$s, the triples that form a representation of $SU(2)$, and
  \item finding the interwinners between these representations and that given by $\{\mathbb{1} \otimes \sigma_b\}_b$.
\end{enumerate}

The first step could in principle be solved by exhaustive search for sets of $4^2-1$ permutations $\{P_{a_1,a_2,b}\}_{a_1,a_2}$ that satisfy the $SU(2)\otimes SU(2)$ algebra. However, the search can be actually reduced to that of just permutations that satisfy the $SU(2)$ algebra, as in the previous section.
To see this, let us consider the particular cases $a_1=0$ and $a_2=0$.
For these cases, Eq.~\eqref{eq:app:3ebit-2} reads
\begin{align}
  \mathcal{M}_b^\dagger \cdot \left(\sigma_{a_1} \otimes \mathbb{1}\right)\cdot \mathcal{M}_b &= P_{a_1,0,b} , \label{eq:imp1} \\
  \mathcal{M}_b^\dagger \cdot\left(\mathbb{1} \otimes \sigma_{a_2}\right)\cdot \mathcal{M}_b &= P_{0,a_2,b} . \label{eq:imp2}
\end{align}
Moreover, if we multiply the two equations above, using the fact that $\mathcal{M}_b^\dagger\mathcal{M}_b=\mathbb{1}$ we obtain the left-hand side of Eq.~\eqref{eq:app:3ebit-2} on the left-hand side.
For the right-hand side, depending on the order of the multiplication, we obtain either $P_{a_1,0,b}P_{0,a_2,b}$ or $P_{0,a_2,b}P_{a_1,0,b}$.
Since the left-hand side coincides in both cases with that of Eq.~\eqref{eq:app:3ebit-2}, we have that $P_{a_1,0,b}P_{0,a_2,b}=P_{0,a_2,b}P_{a_1,0,b}=P_{a_1,a_2,b}$.
This implies, in particular, that $[P_{a_1,0,b}P_{0,a_2,b}]=0\,\forall\,a_1,a_2,b$.

The observations above allow to reduce the search of sets of $4^2-1$ permutations $\{P_{a_1,a_2,b}\}_{a_1,a_2,b}$ that satisfy the $SU(2)\otimes SU(2)$ algebra, to the search of \textit{pairs} of sets of $4-1$ permutations, $(\{P_{a_1,0,b}\}_{a_1},\{P_{0,a_2,b}\}_{a_2})$, that each satisfy the $SU(2)$ algebra, and that commute.
Importantly, finding sets of permutations that satisfy the $SU(2)$ algebra is exactly the same problem as the one we solved in the previous section.
This means that, among all the triplets found in the previous section, the only thing that remains is to find for which pairs the corresponding permutations commute.
This may impose additional constraints on the phases of the permutations.

Once a pair of sets of permutations, $(\{P_{a_1,0,b}\}_{a_1},\{P_{0,a_2,b}\}_{a_2})$, has been identified, one can solve the Sylvester equation associated to Eqs.~\eqref{eq:imp1}-\eqref{eq:imp2}.
The corresponding intertwiner, $\mathcal{M}_b$, is given by the null space of the matrix
\begin{equation} \label{eq:3ebitnullspace}
  \begin{pmatrix}
    \mathbb{1}_4 \otimes \mathbb{1}_2 \otimes \sigma_X - P^T_{0,X} \otimes \mathbb{1}_4 \\
    \mathbb{1}_4 \otimes \mathbb{1}_2 \otimes \sigma_Z - P^T_{0,Z} \otimes \mathbb{1}_4 \\
    \mathbb{1}_4 \otimes \sigma_X \otimes \mathbb{1}_2  - P^T_{X,0} \otimes \mathbb{1}_4 \\
    \mathbb{1}_4   \otimes \sigma_Z  \otimes \mathbb{1}_2  - P^T_{Z,0} \otimes \mathbb{1}_4 \\
  \end{pmatrix} ,
\end{equation}
where we have omitted the subindex $b$ for clarity.

Following the reasoning in Lemma~\ref{lemma:unique} it is easy to see that the intertwiner $\mathcal{M}_b$ is unique up to an arbitrary phase.
This phase will be important in the next steps.
Moreover, the uniqueness of the solution implies automatically that the matrices $\mathcal{M}_b$ will be unitary.

Thus, for each pair of sets of permutations, we find the corresponding intertwiner $\mathcal{M}_b$.
At the end of this search, we have a set of intertwiners $\{\mathcal{M}_b\}_b$ that satisfy Eq.~\eqref{eq:app:3ebit-2}.
In order to find the localizable measurements, $M$, we recall the definition of $\mathcal{M}_b$:
\begin{equation}
  \mathcal{M}_b = M^\dagger \cdot\left(\mathbb{1} \otimes \sigma_b\right)\cdot M.
\end{equation}
This is the one-ebit equation, Eq.~\eqref{eq:1ebit}-\eqref{eq:app:1ebit}, where the right-hand side is given by matrices $\mathcal{M}_b$ instead of permutations with phases.
Therefore, the procedure to find the localizable measurements $M$ is the same as the one described in the previous section, but with the right-hand side given by the $\mathcal{M}_b$.
Namely, we identify triples of matrices $\mathcal{M}_b$ that satisfy the $SU(2)$ algebra, and then find the corresponding intertwiners $M$ by solving the corresponding null space \eqref{eq:1ebitnullspace} with the $P$s substituted by $\mathcal{M}$s.

\section{Properties of localizable measurements} \label{app:properties}
Here, we give additional details on the properties of the  solutions to the third level of the finite-consumption Vaidman scheme, listed in Section~\ref{sec:classification}. We compute the autocorrelations $mm^A_{jk}=\vec m_A(j)\vec m_A(k)$ between the Bloch vectors of Alice and Bob (denoted $\vec m_A(j)$ and $\vec m_B(j)$, $j=1,2,3,4$,  respectively) of the reduced states for tBSM,  $B_2$, EJM and $E2$. These are different, which proves that the measurements with eigenvectors with the same tangle $t$ are not equivalent.
\begin{itemize}
\item Elegant Joint Measurement:
    \begin{equation}
    M_{\text{EJM}}=\frac{1}{\sqrt{8}}\begin{pmatrix}
      1+\ii & -1+\ii & 1-\ii & -1-\ii  \cr
      -2\ii & 0 & 0 & -2\ii  \cr
      0 & 2\ii & 2\ii &  0 \cr
      1-\ii & -1-\ii & 1+\ii & -1+i\end{pmatrix}
    \end{equation}
    This is an iso-entangled basis with $t=\frac{1}{4}$. It belongs to the Elegant family \cite{DelSanto2024}. The autocorrelations are given by
    \begin{equation}
      mm_A = mm_B=\frac{1}{4}\begin{pmatrix}
      3 & -1 & -1 & -1  \cr
      -1 & 3 & -1 & -1  \cr
      -1 & -1 & 3 & -1 \cr
      -1 & -1 & -1 & 3\end{pmatrix}\,.
    \end{equation}

    \item $E_2$ basis:
    \begin{equation}
    M_{E_2}=\frac{1}{2}\begin{pmatrix}
      0 & \sqrt{2} & -\sqrt{2} & 0  \cr
      1 & -1 & -1 & 1  \cr
      1 & 1 & 1 & 1 \cr
      \sqrt{2} & 0 & 0 & -\sqrt{2}\end{pmatrix} \, .
    \end{equation}
    This is the second iso-entangled basis with tangle$=\frac{1}{4}$.
    It can be written in terms of state-vectors as: 
    $M_{E_2}=\{(\psi^-\pm\ket{00})/\sqrt{2}, (\psi^+\pm\ket{11})/\sqrt{2}\}$. It belongs to the Elegant family defined in Ref.~\cite{DelSanto2024}.
    
    The Bloch vector correlation reads:
    \begin{eqnarray}
    mm_A=\frac{1}{4}\begin{pmatrix}
      3 & 1 & -3 & -1  \cr
      1 & 3 & -1 & -3  \cr
      -3 & -1 & 3 & 1 \cr
      1 & -3 & 1 & 3\end{pmatrix}\,,  \quad & mm_B=\frac{1}{4}\begin{pmatrix}
        3 & -3 & 1 & -1  \cr
        -3 & 3 & -1 & 1  \cr
        1 & -1 & 3 & -3 \cr
        -1 & 1 & -3 & 3\end{pmatrix} \,.
    \end{eqnarray}
    Note that these Bloch vector correlations are invariant under local unitaries. Hence, since they differ between EJM and Y, these two solutions are not locality equivalent. Alice and Bob's Bloch vectors read:
  
    \begin{align}
    \vec m_A(1) &= \left(\frac{1}{\sqrt{2}}, 0, -\frac{1}{2}\right) \,,\quad \quad &\vec m_B(1) = \left(\frac{1}{\sqrt{2}}, 0, -\frac{1}{2}\right)\,, \\
    \vec m_A(2) &= \left(\frac{1}{\sqrt{2}}, 0, \frac{1}{2}\right)\,, \quad \quad &\vec m_B(2) = \left(-\frac{1}{\sqrt{2}}, 0, \frac{1}{2}\right)\,, \\
    \vec m_A(3) &= \left(-\frac{1}{\sqrt{2}}, 0, \frac{1}{2}\right)\,, \quad \quad &\vec m_B(3) = \left(\frac{1}{\sqrt{2}}, 0, \frac{1}{2}\right)\,, \\
    \vec m_A(4) &= \left(-\frac{1}{\sqrt{2}}, 0, -\frac{1}{2}\right) \,,\quad \quad &\vec m_B(4) = \left(-\frac{1}{\sqrt{2}}, 0, -\frac{1}{2}\right)\,.
    \end{align}
  \item Twisted Bell basis
  \begin{equation}
  M_{tBSM}=\frac{1}{2}\begin{pmatrix}
    1 & 1 & 1 & 1  \cr
    1 & -1 & -1 & 1  \cr
    0 & \sqrt{2} & -\sqrt{2} & 0 \cr
    -\sqrt{2} & 0 & 0 & \sqrt{2}\end{pmatrix} \,.
  \end{equation}
  This is an iso-entangled basis with tangle$=\frac{1}{2}$. It belongs to the Bell family.
  It can be written in terms of state-vectors as: 
  $M_{\rm tBSM}=\{(\ket{0+}\pm\ket{11})/\sqrt{2}, (\ket{0-}\pm\ket{10})/\sqrt{2}\}$. The autocorrelations are:
  \begin{eqnarray}
  mm_A=\frac{1}{2}\begin{pmatrix}
    1 & -1 & 1 & -1  \cr
    -1 & 1 & -1 & 1  \cr
    1 & -1 & 1 & -1 \cr
    -1 & 1 & -1 & 1\end{pmatrix} \,, \quad  &mm_B=\frac{1}{2}\begin{pmatrix}
      1 & -1 & -1 & 1  \cr
      -1 & 1 & 1 & -1  \cr
      -1 & 1 & 1 & -1 \cr
      1 & -1 & -1 & 1\end{pmatrix} \,,
  \end{eqnarray}
  and the Bloch vectors read:
  \begin{align}
  \vec m_A(2)&=\left(\frac{1}{\sqrt{2}}, 0, 0\right), &  \vec m_B(2)=\left(-\half, 0, \half\right),  \\ 
  \vec m_A(1)&=\left(-\frac{1}{\sqrt{2}}, 0, 0\right), &  \vec m_B(1)=\left(\half, 0, -\half \right), \\ 
  \vec m_A(3)&=\left(-\frac{1}{\sqrt{2}}, 0, 0\right), &  \vec m_B(3)=\left(-\half, 0, \half\right),  \\ 
  \vec m_A(4)&=\left(\half, 0, 0\right), &  \vec m_B(4)=\left(\half, 0, -\half\right).
  \end{align}
  The tBSM basis is locally unitarily equivalent to the basis:
  \begin{equation}
    tBSM \sim \frac{1}{2}\begin{pmatrix}
      c & s & 0 & 0  \cr
      0 & 0 & c & s  \cr
      0 & 0 & s & -c \cr
      s & -c & 0 & 0\end{pmatrix} \,,
    \end{equation}
that generates a distribution in the triangle respects the Parity Token Counting rigidity condition and is hence nonlocal \cite{Boreiri2023}. 
  \item The $B_2$ basis is given by
  \begin{equation}
  M_{B_2}=\frac{1}{2}\begin{pmatrix}
    0 & 0 & \ii\sqrt{2} & -\ii\sqrt{2}  \cr
    \sqrt{2} & -\sqrt{2} & 0 & 0  \cr
    1 & 1 & 1 & 1 \cr
    -1 & -1 & 1 & 1\end{pmatrix} \,.
  \end{equation}
  This is the second iso-entangled basis with tangle$=\frac{1}{2}$. It belongs to the Bell family. It can be written in terms of state-vectors as: 
  $M_{B_2}=\{(\ket{1-}\pm \ket{01})/\sqrt{2}, (\ket{1+}\pm \ii\ket{00})/\sqrt{2}\}$. The autocorrelations are
  \begin{eqnarray}
  mm_A=\frac{1}{2}\begin{pmatrix}
    1 & -1 & 0 & 0  \cr
    -1 & 1 & 0 & 0  \cr
    0 & 0 & 1 & -1 \cr
    0 & 0 & -1 & 1\end{pmatrix}\, , \quad &   mm_B=\frac{1}{2}\begin{pmatrix}
      1 & 1 & -1 & -1  \cr
      1 & 1 & -1 & -1  \cr
      -1 & -1 & 1 & 1 \cr
      -1 & -1 & 1 & 1\end{pmatrix} \,,
  \end{eqnarray}
  and the Bloch vectors are given by
  \begin{align}
  \vec m_A(1) &= \left(-\frac{1}{\sqrt{2}}, 0, 0\right)  &\vec m_B(1) = \left(-\half, 0, -\half\right) \\ 
  \vec m_A(2) &= \left(\frac{1}{\sqrt{2}}, 0, 0\right)   &\vec m_B(2) = \left(-\half, 0, -\half\right) \\
  \vec m_A(3) &= \left(0,-\frac{1}{\sqrt{2}},  0\right)  &\vec m_B(3) = \left(\half, 0, \half\right) \\ 
  \vec m_A(4) &= \left(0,\frac{1}{\sqrt{2}}, 0\right)  &\vec m_B(4) = \left(\half, 0, \half\right)  
  \end{align}
  Here again, since the Bloch vectors correlations differ from those of the $M_{\rm tBSM}$ solution, the $M_{\rm tBSM}$ and $M_{B_2}$ solutions differ.

\end{itemize}

\section{Heuristic method for finding localizable measurements}\label{app:heuristic}

To numerically search for localisable measurements we introduce a cost function $C(M)$ which is minimized when $M$ satisfies the localization equation $f(\mathbf{t},M)=P_\mathbf{t}$ i.e. when it results in a permutation with phases for each possible teleportation distortion laballed by $\mathbf{t}$. Explicitly, we define the cost function as
\begin{equation}
C(M) = \sum_\mathbf{t} \left|\left|2\left|f(\mathbf{t},M)\right|-J\right|-J\right| \,,
\end{equation}
where $J$ is an all-ones matrix. We then use an explicit parametrization of a generic unitary matrix $M$ with the relevant dimension \cite{Spengler2010} and use Nelder-Mead optimization to find the minimum of $C$. 

\section{Twisted Bell basis in higher dimensions}\label{app:twistedbell}

Here, we study the example of the tBSM in order to illustrate how one can use the dimension-free representations to find joint measurements in higher dimensions. For simplicity, we write the tBSM basis in the following form
\begin{equation}
  M_{\rm tBSM} \sim   \frac{1}{2}\left(
\begin{array}{cccc}
 1 & 1 & 1 & 1 \\
 \ii & \ii & -\ii & -\ii \\
 \ii & -\ii & \ii & -\ii \\
 \ii & -\ii & -\ii & \ii \\
\end{array}
\right) \,,
\end{equation}
for which the representation $\{P_{a_1a_2b}\}$ takes a relatively simple form which can be generalised to dimension $d$:
\begin{align}
P_{0,X,X} &= \openone \otimes X ~, & P_{0,Z,X} &= X \otimes w (-1)^{1/d + 1} X Z ~, \nonumber \\
P_{X,0,X} & = X \otimes w \openone ~, & P_{Z,0,X} &= w Z \otimes X ~, \nonumber\\
P_{0,X,Z} &= \openone \otimes X ~, & P_{0,Z,Z} &= \openone \otimes Z~, \nonumber\\ 
P_{X,0,Z} & = X \otimes  \openone ~, & P_{Z,0,Z} &= w Z\otimes \openone ~, \nonumber
\end{align}    

One can verify that for all $d$, (i) all $P_{a_1, a_2, b}$ are permutation matrices with phases, (ii) all four sets $\{P_{a_1, 0, X} \}_{a_1}$, $\{P_{ 0, a_2, X} \}_{a_2}$, $\{P_{a_1, 0, Z} \}_{a_1}$, $\{P_{ 0, a_2, Z} \}_{a_2}$ satisfy the braiding relations and normalisation conditions, and (iii) the sets $\{P_{a_1, 0, b}\}_{a_1}$ and $\{P_{0, a_2, b}\}_{a_2}$ commute for $b=X,Z$. The corresponding intertwiners, $M_X$ and $M_Z$, given by \eqref{eq:3ebitnullspace}, again satisfy the Weyl-Heisenberg relations after normalisation. Finally, the intertwiner $M$ defines a generalisation of tBSM for arbitrary dimensions.

\section{Three-qubit localizable measurements}\label{app:3qubit}

Using numerical search, we find several three-qubit measurements that can be localized at the second level of the Vaidman scheme. The unitary representation of these measurements are solutions to Eq.~\eqref{eq:2ndlevel3qubit} in the main text. A full bank of numerical solutions we found is available online \cite{compapp}. Here we present two examples with special symmetries.

\begin{equation}
  M_{98} = \frac{1}{2}
  \begin{pmatrix}
    0 & 0 & 0 & 0 & 1 & 1 & 1 & 1 \\
    0 & 0 & 0 & 0 & 1 & -1 & 1 & -1 \\
    0 & 0 & 0 & 0 & 1 & 1 & -1 & -1 \\
    1 & 1 & 1 & 1 & 0 & 0 & 0 & 0 \\
    0 & 0 & 0 & 0 & -\ii & \ii & \ii & -\ii \\
    1 & -1 & 1 & -1 & 0 & 0 & 0 & 0 \\
    \ii & \ii & -\ii & -\ii & 0 & 0 & 0 & 0 \\
    1 & -1 & -1 & 1 & 0 & 0 & 0 & 0
  \end{pmatrix},
  \qquad
  M_{106} = \frac{1}{2}
  \begin{pmatrix}
    0 & 0 & 0 & 0 & -1 & 1 & 1 & -1 \\
    0 & 0 & 0 & 0 & 1 & 1 & 1 & 1 \\
    0 & 0 & 0 & 0 & 1 & 1 & -1 & -1 \\
    1 & 1 & 1 & 1 & 0 & 0 & 0 & 0 \\
    0 & 0 & 0 & 0 & 1 & -1 & 1 & -1 \\
    1 & -1 & -1 & 1 & 0 & 0 & 0 & 0 \\
    1 & -1 & 1 & -1 & 0 & 0 & 0 & 0 \\
    -1 & -1 & 1 & 1 & 0 & 0 & 0 & 0
  \end{pmatrix}.
\end{equation}

Both bases share defining properties with the EJM: they are iso-entangled, and the reduced single-party states point to the vertices of simple geometric objects: for $M_{98}$ the single-party reduced states point to the vertices of tetrahedra (different for the different parties), and in the case of $M_{106}$ they point to the vertices of the same square.

\end{document}